\documentclass[journal]{IEEEtran}
\ifCLASSINFOpdf
\else
\fi
\usepackage{subfigure} 
\usepackage{graphicx}
\usepackage{amssymb}
\usepackage{amsmath}
\usepackage{epstopdf}
\usepackage{cite}
\usepackage{color}
\newtheorem{theorem}{Theorem}

\newtheorem{lemma}{Lemma}
\newtheorem{definition}{Definition}
\newtheorem{assumption}{Assumption}
\newtheorem{remark}{Remark}
\newtheorem{proof}{Proof}
\usepackage{tikz,xcolor,hyperref}
\definecolor{lime}{HTML}{A6CE39}
\DeclareRobustCommand{\orcidicon}{
	\begin{tikzpicture}
		\draw[lime, fill=lime] (0,0) 
		circle [radius=0.16] 
		node[white] {{\fontfamily{qag}\selectfont \tiny ID}}; 
		\draw[white, fill=white] (-0.0625,0.095) 
		circle [radius=0.007];	  
	\end{tikzpicture}
	\hspace{-2mm}}
\foreach \x in {A, ..., Z}{
	\expandafter\xdef\csname orcid\x\endcsname{\noexpand\href{https://orcid.org/\csname orcidauthor\x\endcsname}{\noexpand\orcidicon}}
}
\makeatother

\begin{document}

\title{A PI+R Control Scheme Based on Multi-agent Systems for Economic Dispatch in Isolated BESSs}

\author{Yalin Zhang\orcidA{},
        Zhongxin Liu\orcidB{},~\IEEEmembership{Member,~IEEE,}
        and Zengqiang Chen\orcidC{}
\thanks{Manuscript received XX, XX; revised XX, XX;
	accepted XX, XX. This work is supported by the National Natural Science Foundation of China (Grant No. 62103203) and the General Terminal IC Interdisciplinary Science Center of Nankai University (Corresponding author: Zhongxin Liu.). 
	\par The authors are with the College of Artificial Intelligence, Nankai University, Tianjin 300350, and also with the Tianjin Key Laboratory of Interventional Brain-Computer Interface and Intelligent Rehabilitation, Nankai University, Tianjin 300350, China (e-mail: zhangyl@mail.nankai.edu.cn; lzhx@nankai.edu.cn; chenzq@nankai.edu.cn).}}

\markboth{IEEE/CAA JOURNAL OF AUTOMATICA SINICA, VOL. XX, NO. XX, XX XX}%
{Zhang \MakeLowercase{\textit{et al.}}: A PI+R Control Scheme Based on Multi-agent Systems for Economic Dispatch in Isolated BESSs}

\maketitle

\begin{abstract}
	Battery energy storage systems (BESSs) are widely used in smart grids. However, power consumed by inner impedance and the capacity degradation of each battery unit become particularly severe, which has resulted in an increase in operating costs. The general economic dispatch (ED) algorithm based on marginal cost (MC) consensus is usually a proportional (P) controller, which encounters the defects of slow convergence speed and low control accuracy. In order to solve the distributed ED problem of the isolated BESS network with excellent dynamic and steady-state performance, we \textcolor{blue}{attempt} to design a proportional integral (PI) controller with a reset mechanism (PI+R) to asymptotically promote MC consensus and total power mismatch towards 0 in this paper. To be frank, the integral term in the PI controller is reset to 0 at an appropriate time when the proportional term undergoes a zero crossing, which accelerates convergence, improves control accuracy, and avoids overshoot. The eigenvalues of the system under a PI+R controller is well analyzed, ensuring the regularity of the system and enabling the reset mechanism. To ensure supply and demand balance within the isolated BESSs, a centralized reset mechanism is introduced, so that the controller is distributed in a flow set and centralized in a jump set. To cope with Zeno behavior and input delay, a dwell time that the system resides in a flow set is given. Based on this, the system with input delays can be reduced to a time-delay free system. Considering the capacity limitation of the battery, a modified MC scheme with PI+R controller is designed. The correctness of the designed scheme is verified through relevant simulations.
\end{abstract}

\begin{IEEEkeywords}
Economic dispatch, battery energy storage system, multi-agent system, reset control, distributed control.
\end{IEEEkeywords}

\IEEEpeerreviewmaketitle

\section{Introduction}
\IEEEPARstart{B}{attery} energy storage system (BESS), as a supplementary power supply, is an important component of the uninterruptible power supply \cite{yong_capacity_2023}. In addition, BESSs can smooth peak and valley periods to alleviate community electricity shortages \cite{wangApplicationEnergyStorage2022, solyaliComprehensiveStateoftheartReview2022}. BESSs has become a necessity integrated into the smart grid and energy internet. The introduction of BESSs can reduce pollutant emissions, and expenses of communities and users can be reduced due to the transfer of power over time \cite{duanDistributedAlgorithmBased2021,ghanjatiOptimalSizingEnergy2022}.
With the continuous increase of various load powers, the internal impedance power consumption of each battery cell is also increasing, which not only causes unnecessary waste, but the generated heat seriously threatens the safe operation of BESS \cite{yangModellingOptimalEnergy2022,duanDistributedAlgorithmBased2021}. In addition, the increase in power accelerates the degradation of capacity, which seriously affects the service life of battery cells, and so on. All of \textcolor{blue}{these have} led to a continuous increase in the operating costs of BESSs \cite{duanDistributedAlgorithmBased2021,rouholaminiReviewModelingManagement2022}.
\par To reduce the operating expenses of microgrids (MGs), scholars have made a lot of efforts. Previously, some excellent centralized methods were \textcolor{blue}{proposed} to address energy management problems in MGs, such as $\lambda$ iteration \cite{wangDistributedOptimalPower2021}, gradient search \cite{liNewDistributedEnergy2021,hassanImprovedMantaRay2021} and some intelligent search algorithms \cite{wangMultiagentbasedCollaborativeRegulation2022}, under which energy management problems undeniably have been well addressed. However, the centralized method requires high communication facility construction cost, heavy computing burden on the central node, and is vulnerable to single point of failure, and lacks an effective privacy protection mechanism. In view of these shortcomings, the distributed methods have been chosen by many authors to design controllers.
\par Firstly, \textcolor{blue}{we} review the distributed ED scheme for MGs. According to recent researches, most distributed solutions for ED problems in MGs are designed based on multi-agent systems (MASs) with continuous or discrete time dynamics. The \textcolor{blue}{scholars} design a distributed energy management system based on MASs in \cite{ullahComputationallyEfficientConsensusBased2021} and \cite{wangDisEHPPCEnablingHeterogeneous2022}, thus facilitating the formation of privacy protection mechanisms. In order to address the global ED problem of AC/DC interconnected microgrids, a distributed scheme is designed in \cite{liDistributedControlStrategy2021} and based on this, the bus voltage is restored. Subsequently, a distributed ED scheme with event triggering mechanism is proposed in \cite{pengDistributedPeriodicEventTriggered2022}. To achieve seamless connection between isolated and grid-connected modes, a unified ED scheme is developed in \cite{chenDistributedEconomicDispatch2021}. The above are all ED schemes with discrete time dynamics, and schemes with continuous time dynamics are also very rich. For example, the \textcolor{blue}{scholars} design a distributed ED scheme by using adaptive methods in \cite{songCostBasedAdaptiveDroop2021} to participate in the secondary control. A local event driven ED scheme is designed in \cite{sahooLocalizedEventDrivenResilient2021} to address data integrity attacks. To accelerate the convergence of the system, a novel fully distributed fixed time scheme is designed in \cite{zaeryNovelFullyDistributed2021} to solve the ED problem of multiple microgrids.
\par All the schemes designed above are proportional (P) controllers and based on the consensus of marginal cost (MC) \cite{spanglerPowerGenerationOperation2014,tanExtensionsLocationalMarginal2022}, which is effective and easy to design controllers.
Fortunately, this scheme based on MC consensus has been successfully transplanted and preliminarily \textcolor{blue}{solve} the ED problem of BESSs in \cite{yuFrequencySynchronizationPower2021b, zhaoDifferentialPrivacyEnergy2022, chenDistributedCooperativeControl2021, jinManageDistributedEnergy2022}. 
\par Although the authors in the above studies have proposed reliable solutions for the DED problem of BESSs, there is still a lot of work to be done. \textcolor{blue}{Without detailed modeling, the authors in \cite{yuFrequencySynchronizationPower2021b} directly introduce a convex function as the cost function for a BESS network.} The same issue also arises in \cite{zhaoDifferentialPrivacyEnergy2022}, where the coefficients of the cost function do not specify their composition in detail. This issue seems to be modeled as charging and discharging efficiency in \cite{chenDistributedCooperativeControl2021}, which has been further alleviated. In view of this, the authors design an ED scheme based on MC consensus in \cite{jinManageDistributedEnergy2022}. Further, P controllers are still the mainstream control scheme in the above results. This inevitably causes the system to suffer from low control accuracy and slow convergence speed. Besides, although finite/fixed time controllers can accelerate system convergence and improve control accuracy, the fractional order terms in the controller pose significant difficulties for the physical implementation of the controller. Recently, researchers have proposed a PI controller with a reset mechanism \cite{banosResetControlSystems2011, huResetControlConsensus2022,  mengResetControlSynchronization2019}, which not only has the advantages of PI controllers such as fast response speed and high control accuracy, but also eliminates some drawbacks such as large overshoot and oscillation. Moreover, we have developed an MC distributed consensus controller with reset mechanism for grid connected BESSs in recent research \cite{10302354}. Therefore, in this paper, we attempt to design a PI+R controller with continue dynamics for the isolated BESS network to improve the dynamic and steady performance, where the internal power consumption and capacity degradation of each battery cell are considered in the objective function. Our main work in this paper is listed as follows:
\begin{itemize}
	\item For isolated BESSs, a PI+R controller is designed so that MCs reach consensus and power mismatches converge to 0. The reset mechanism is designed based on zero crossing. The parameter conditions for the reset mechanism to function are given. The stability is well analyzed.
	\item The balance between supply and demand and capacity constraints are well maintained. To ensure supply and demand balance, a centralized reset mechanism has been introduced. Considering capacity constraints, the convergence rate of the modified MC scheme by time trigger is also accelerated by the reset mechanism.
	\item The performance of the controller under Zeno behavior and input delay is analyzed. The dwell time is introduced to ensure non Zeno behavior.
	Therefore, the system with input delays can be reduced to a time-delay free system.
\end{itemize}
\par Next, the multi-objective ED problem for BESSs and its KKT condition are analyzed in section II, and control objectives are given. And the theoretical analysis of the designed scheme is arranged in section III, where supply and demand balance, Zeno behavior and time delay effects are all analyzed. Several simulation cases and related analysis are arranged in section IV. Finally, the conclusion is \textcolor{blue}{drawn} in section V.
\section{Preliminaries}
For an isolated MG containing $n$ BESSs, as shown in Fig. \ref{figei1}, a certain amount of power is consumed by the internal impedance of each battery unit, coupled with capacity degradation, which requires MG operators to pay more for BESSs. In this section, the power loss and capacity degradation are analyzed and modeled to construct a management expenditure function for BESS. The KKT condition for optimal ED are well analyzed, thereby providing control objectives.
\subsection{BESS and its expenditure model}
The power consumed by the internal impedance of the battery is a linear function of power \cite{yuFrequencySynchronizationPower2021b, chenDistributedCooperativeControl2021, jinManageDistributedEnergy2022, zhaoDifferentialPrivacyEnergy2022}, i.e.,
\begin{equation}
	\label{1}
	f_{Li}=\gamma_iP_i,
\end{equation}
where the inner power loss ratio $\gamma_i$ of a BESS is often proportional to output power $P_i$ on a small time scale \cite{yuFrequencySynchronizationPower2021b, chenDistributedCooperativeControl2021, jinManageDistributedEnergy2022, zhaoDifferentialPrivacyEnergy2022}, i.e.,
\begin{equation}
	\label{2}
	\gamma_i=\pi_iP_i+\Omega_i,
\end{equation}
where $\pi_i$ and $\Omega_i$ are two constant coefficients. Battery capacity degradation is proportional to $f_{Li}$ \cite{hossainlipuReviewControllersOptimizations2022} and can be expressed as
\begin{equation}
	\label{3}
	f_{Ci}=m_if_{Li},
\end{equation}
where $m_i$ is called as the capacity degradation coefficient. Here, we construct an objective function in (\ref{4}) considering power loss and capacity degradation, which are subjected to the supply and demand balance.
\begin{equation}
	\label{4}
	\begin{array}{l}
		F_d=\sum_{i=1}^{n}{(\gamma_iP_i+f_{Ci})}\\
		\;\;\;\;\;=\sum_{i=1}^{n}{(1+m_i)(\pi_iP^2_i+\Omega_iP_i)} \\
		s.t. \sum_{i=1}^{n}{D_i}-\sum_{i=1}^{n}{P_i}=0
	\end{array},
\end{equation}
where $D_i$ is the local estimated active power load. In this paper, the power loss on the line is not considered.
\subsection{Analysis for optimal ED}
\par Obviously, the expenditure functions (\ref{4}) is a convex function of output power, which implies that the optimal condition can be obtained. In view of this, construct a Lagrangian function $L_{ac}(P_1,P_2,\cdots,P_n)$ for (\ref{4}),
\begin{equation}
	\label{5}
	L_{ac}(P_1,P_2,\cdots,P_n)=F_d+\omega(\sum_{i=1}^{n}{D_i}-\sum_{i=1}^n{P_i}),
\end{equation}
where $\omega$ is a Lagrange multiplier. Thus, the following Karush-Kuhn-Tucker (KKT) condition for BESSs can be obtained by calculating the gradient $\nabla L_{ac}(P_1,P_2,\cdots,P_n)$ as follow
\begin{equation}
	\nabla L_{ac}(P_1,P_2,\cdots,P_n)=2\frac{\partial L_{ac}(P_1,P_2,\cdots,P_n)}{\partial P_i}=\lambda_i-\omega, 
	\label{6}
\end{equation}
where $\lambda_i=(1+m_i)(2\pi_iP_i+\Omega_i)$ is so-called MC. Thus, (\ref{6}) can derive the optimal ED condition, i.e., $\lambda_i=\lambda_j$ for $\forall i,\,j\in\{1,2,\cdots,n\}$, for isolated BESSs by letting $\frac{\partial L_{ac}(P_1,P_2,\cdots,P_n)}{\partial P_i}=0$. From this, \textcolor{blue}{one} can conclude that the total expenditure for BESSs is the lowest if MCs of BESSs reach consensus. Denote $\alpha_i=(1+m_i)\Omega_i$ and $\beta_i=(1+m_i)\pi_i$. Thus, output power of BESS $i$ can be calculated as follow
\begin{equation}
	P_i=\frac{\lambda_i}{2\beta_i}-\frac{\alpha_i}{2\beta_i}.
	\label{7}
\end{equation}
\subsection{Control objectives}
\begin{figure}
	\centering
	\includegraphics[width=7cm]{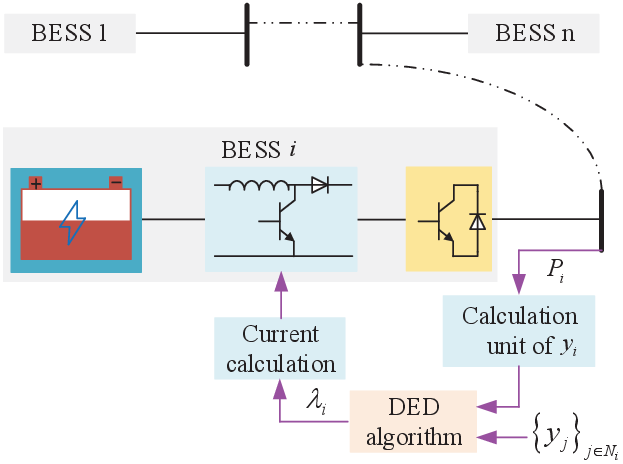} \caption{The control framework with a DED algorithm for the networked BESSs.\label{figei1}}
\end{figure}
\par In this paper, we aim to reduce the management expenses of networked BESSs as shown in Fig. \ref{figei1}, including inner power loss cost and capacity degradation cost. Meanwhile, the balance of supply and demand of power should be maintained. Specifically, they are described as follows:
\begin{itemize}
	\item In the island BESSs, MCs reach consensus so that total active power load in the resistive network is shared among batteries according to MCs. That is, mathematically,
	$$\lim_{t \to +\infty}{\left|\lambda_i-\lambda_j\right|}=0,$$
	where $i,j\in {1,2,\cdots,n}$.
	\item Furthermore, a balance needs to be maintained between the sum of the local power determined by MCs and the total active power load., i.e.,
	$$\sum_{i=1}^{n}P_i=\sum_{i=1}^{n}D_i.$$
\end{itemize}
\begin{remark}
	The inner power loss ratio $\gamma_i$ only conforms to (\ref{2}) on a small time scale. The ED algorithm based on MCs consensus has promising applications in the secondary control of BESSs. Combining MC with droop control, it is easy to design a secondary controller with MC consensus. In this way, BESS can simultaneously achieve optimal ED, frequency (voltage) restoration, and supply-demand balance in AC (DC) MG on a small time scale.
\end{remark}
\section{A DED scheme with reset mechanism for BESS}
\par The DED scheme has been designed in previous studies based on MC consensus both in the grid-connected and isolated mode. Unlike BESSs in the grid-connected mode, the total load in the network needs to be shared internally by all isolated BESSs. Unfortunately, currently designed distributed schemes are based on PCs or finite/fixed time controllers, both of which can accomplish tasks with significant disadvantages in terms of dynamic performance. For example, PCs cause the system to suffer from low control accuracy, while the introduction of fractional terms in fixed/finite time protocols brings about a negative effect of chattering for MC and power mismatches. In view of this, a PI controller with a reset mechanism based DED scheme is proposed such that MCs reach consensus faster and smoother \textcolor{blue}{than traditional schemes} while the balance between supply and demand is maintained in real time.
\subsection{Graph theory}
\par Generally, Laplacian matrix is indispensable in the controller based on multi-agent system, which often corresponds to a communication topology. For $n$ BESSs governed by \textcolor{blue}{an} MAS, agents communicate with each other over a communication graph $\cal G(\cal V,\cal E)$ with a set of vertices ${\cal V}=\{v_1,v_2,\cdots,v_n\}$ and a set of edges ${\cal E}=\{(v_i,v_j)|v_i,v_j\in{\cal V}\}$, where $v_i$ and $v_j$ are viewed as \textcolor{blue}{agents} $i$ and $j$ respectively. Denote $A=[a_{ij}]$ and $D=diag\{d_1,d_2,\cdots,d_n\}$ as the adjacency matrix and degree matrix corresponding to the graph $\cal G$ respectively. $(v_i,v_j)$ implies that agent $i$ can access to agent $j$, and $v_j$ is viewed as a neighbor of agent $i$ and $a_{ij}=1$. If $v_j$ is not a neighbor of agent $i$, $a_{ij}=0$. $d_i$ is the number of neighbors of agent $i$. Thus, the Laplacian matrix $L$ can be calculated by $L=D-A$.   
\begin{assumption}
	The communication graph of the MAS is undirected and connected.\label{assum 1}
\end{assumption} 
\begin{lemma}
	\label{lemma1}
	\cite{linNecessarySufficientGraphical2005}: For an undirected connected graph, $L$ is a positive semi-definite matrix, whose eigenvalues are non-negative real and ordered as $0=\eta_1\le \eta_2\le \cdots\le \eta_n$.
\end{lemma}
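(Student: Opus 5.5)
The plan is to prove Lemma \ref{lemma1} directly from the quadratic form of $L$. Under Assumption \ref{assum 1} the graph is undirected, so $a_{ij}=a_{ji}$ and $A$ is symmetric. Since $D$ is diagonal, $L=D-A$ is real symmetric. By the spectral theorem, all eigenvalues of $L$ are real and $L$ is orthogonally diagonalizable, so they can be ordered as $\eta_1\le\eta_2\le\cdots\le\eta_n$.

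The key step is the identity, valid for any $x\in\mathbb{R}^n$,
\begin{equation*}
x^{T}Lx=\sum_{i=1}^{n}d_ix_i^2-\sum_{i=1}^{n}\sum_{j=1}^{n}a_{ij}x_ix_j=\frac{1}{2}\sum_{i=1}^{n}\sum_{j=1}^{n}a_{ij}(x_i-x_j)^2 .
\end{equation*}
It follows by expanding $(x_i-x_j)^2$, using $d_i=\sum_{j}a_{ij}$ and the symmetry $a_{ij}=a_{ji}$ so that the two square terms each reproduce $\sum_i d_ix_i^2$. Since $a_{ij}\in\{0,1\}$, the right-hand side is nonnegative, so $L$ is positive semi-definite. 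Applying this to a unit eigenvector $x$ with $Lx=\eta x$ gives $\eta=x^TLx\ge 0$, so every eigenvalue is non-negative.

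It remains to show that the smallest eigenvalue is exactly $0$. Each row of $L$ sums to $d_i-\sum_j a_{ij}=0$, so $L\mathbf{1}=0$ with $\mathbf{1}=[1,\cdots,1]^T$. Hence $0$ is an eigenvalue, and combined with non-negativity this gives $\eta_1=0$. Connectivity is not needed for the stated ordering. It does, however, yield the sharper fact $\eta_2>0$: if $x^TLx=0$, every edge forces $x_i=x_j$, and a path between any two vertices then forces $x\in\mathrm{span}\{\mathbf{1}\}$, so the null space is one-dimensional. I would mention this since later consensus arguments likely rely on it.

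There is no real obstacle here. The only step needing care is the bookkeeping in the quadratic-form identity, namely the factor $\frac{1}{2}$ from counting each undirected edge twice, which depends on the symmetry of $A$.
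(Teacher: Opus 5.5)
Your proof is correct. The paper gives no argument of its own here: Lemma \ref{lemma1} is simply imported from \cite{linNecessarySufficientGraphical2005}. You instead supply the standard self-contained derivation, and each step holds:
\begin{itemize}
\item symmetry of $L$ (from $a_{ij}=a_{ji}$) gives real eigenvalues;
\item the identity $x^TLx=\frac12\sum_{i,j}a_{ij}(x_i-x_j)^2$ gives positive semi-definiteness;
\item zero row sums give $L\mathbf{1}=0$, hence $\eta_1=0$.
\end{itemize}
The bookkeeping in the identity checks out, using $d_i=\sum_j a_{ij}$ and the symmetry of $A$.

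Beyond the citation, your version separates the hypotheses. The chain $0=\eta_1\le\eta_2\le\cdots\le\eta_n$ needs only undirectedness. Connectivity gives the strictly stronger fact $\eta_2>0$: the kernel is one-dimensional, so the zero eigenvalue is simple because $L$ is symmetric.

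The paper relies on this sharper fact later without stating it. In the proof of Lemma \ref{lemma2}, the claim that $\det(A_i+\eta_iB_i)=\eta_i^2$ vanishes if and only if $i=1$ requires $\eta_i>0$ for all $i\ge 2$. The conclusion $\mathrm{rank}(H)=2n-1$ depends on the same fact. The lemma as literally stated, with only $\eta_1\le\eta_2$, does not provide it. So your connectivity remark is worth keeping: it closes the gap between how the lemma is stated and how it is used.
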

\subsection{A DED algorithm with a zero crossing based reset mechanism for isolated BESSs} 
\par Compared with a P controller based DED scheme, MCs and power mismatches under P controllers \textcolor{blue}{exhibit} good response speed, but low control accuracy. By introducing integral control to form a PI controller, the system behaves good control accuracy, but overshoot appears. To overcome the above shortcomings, a reset mechanism, as shown in Fig. \ref{pir}, is introduced here to modify the standard PI controller so that MCs can reach consensus quickly without overshoot. At the same time, power demand and supply balance can be maintained. 
\par Thus, we first recall a P control scheme in \cite{yuFrequencySynchronizationPower2021b} as follow
\begin{subequations}
	\begin{equation}
		\label{8a}
		\dot y_i=u_i,
	\end{equation}
    \begin{equation}
	    \label{8b}
	    u_i=-A_iy_i-B_i\sum_{j=1}^{n}a_{ij}(y_i-y_j),
    \end{equation}
\end{subequations}
where $y_i=[\lambda_i\quad\rho_i]^T$, $\rho_i$ is namely the mismatch power of BESS $i$ with $\rho_i(0)=D_i-P_i(0)$, $A_i=\left[\begin{matrix}
	0 & -1\\
	0& \frac{1}{2\beta_i}
\end{matrix}\right]$ and $B_i=\left[\begin{matrix}
1 & 0\\
-\frac{1}{2\beta_i}& 1
\end{matrix}\right]$. Thus, a compact form can be obtained that
\begin{subequations}
	\begin{equation}
		\label{9a}
		\dot y=u,
	\end{equation}
	\begin{equation}
		\label{9b}
		u=-Hy,
	\end{equation}
\end{subequations}
where $H=A+B(L\otimes I_2)$, $A=diag(A_i)$, $B=diag(B_i)$, $y=[y_1^T,y_2^T,\cdots,y_n^T]^T$, $u=[u_1^T,u_2^T,\cdots,u_n^T]^T$. Define a variable as
\begin{equation}
	\label{10}
	\xi=Hy.
\end{equation}
The relevant properties of the matrix $H$ are given in Lemma \ref{lemma2}.
\begin{lemma}
	\label{lemma2}
	Under Lemma \ref{lemma1}, the matrix $H$ defined in (\ref{10}) possesses a zero and $2n-1$ positive eigenvalues and is diagonalizable.
\end{lemma}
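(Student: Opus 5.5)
We reduce the problem to a block form by regrouping the state as $y=(\lambda^T,\rho^T)^T$ through a permutation similarity, where $\lambda=[\lambda_1,\dots,\lambda_n]^T$, $\rho=[\rho_1,\dots,\rho_n]^T$ and $\Gamma=\mathrm{diag}\{\frac{1}{2\beta_1},\dots,\frac{1}{2\beta_n}\}\succ 0$. Expanding $A_iy_i$ and $B_i\sum_j a_{ij}(y_i-y_j)$ shows that $H$ is permutation-similar to
\begin{equation*}
\tilde H=\left[\begin{matrix} L & -I_n\\ -\Gamma L & \Gamma+L\end{matrix}\right].
\end{equation*}
So it suffices to study the spectrum and semisimplicity of $\tilde H$. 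Let $\tilde H v=\mu v$ with $v=(x^T,z^T)^T\neq 0$. The first block row gives $z=(L-\mu I)x$, which forces $x\neq 0$. Substituting into the second block row and simplifying ($\Gamma(L-\mu)x-\Gamma Lx=-\mu\Gamma x$) yields the quadratic eigenvalue problem
\begin{equation*}
Q(\mu)x:=\big((L-\mu I)^2-\mu\Gamma\big)x=0 .
\end{equation*}
Conversely, each such pair $(\mu,x)$ gives an eigenvector of $\tilde H$. Hence the $2n$ eigenvalues of $\tilde H$ are exactly the roots of $\det Q(\mu)=0$, which has degree $2n$ with leading coefficient $1$.

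\textbf{The zero eigenvalue.} For $\mu=0$ we need $L^2x=0$, i.e. $Lx=0$ since $L$ is symmetric. By Lemma~\ref{lemma1} and Assumption~\ref{assum 1}, this gives $x=\mathbf 1_n$ and $z=0$, so the geometric multiplicity is one. For algebraic simplicity I will use the standard criterion for matrix polynomials: an eigenvalue with a one-dimensional kernel and left/right null vector $x$ (here $Q$ is symmetric for real $\mu$) is simple iff $x^TQ'(\mu)x\neq0$. Since $Q'(0)=-2L-\Gamma$, we get $\mathbf 1^TQ'(0)\mathbf 1=-\mathbf 1^T\Gamma\mathbf 1<0$, so $0$ is a simple eigenvalue.

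\textbf{Nonzero eigenvalues are real and positive.} Take $\mu\neq0$ and write $(L-\mu)z=\mu\Gamma x$ with $z=(L-\mu)x$. Pair this with $x^*$ and use the symmetry of $L$ to get
\begin{equation*}
\|(L-\mu)x\|^2+(\mu-\bar\mu)\,x^*(L-\mu)x=\mu\, x^*\Gamma x .
\end{equation*}
Writing $\mu=a+ib$, and setting $\ell=x^*Lx\ge0$, $g=x^*\Gamma x>0$, $s=\|x\|^2$, the left side equals $\|(L-a)x\|^2+b^2s+2ib(\ell-as)-2ib^2 s\cdot i$-type terms. I will separate real and imaginary parts and aim to show $b=0$; the imaginary part should give $b\,(2(\ell-as)-g)=0$. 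Once $\mu$ is real, the identity reads $\|(L-\mu)x\|^2=\mu g$. This forces $\mu\ge0$, and hence $\mu>0$. If the case $b\neq0$ cannot be excluded cleanly this way, I will instead symmetrize: look for a positive definite $S$ (built from $\Gamma^{-1}$ and $L$) with $S\tilde H$ symmetric. That would give realness and diagonalizability at once and is worth trying first.

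\textbf{Diagonalizability.} This is the main obstacle: a repeated positive root of $\det Q$ might carry a Jordan block. The plan is to show, for every real root $\mu>0$ and every $x\in\ker Q(\mu)$, that
\begin{equation*}
x^TQ'(\mu)x=-2x^T(L-\mu)x-x^T\Gamma x\neq 0 .
\end{equation*}
This can be attempted by combining it with $\|(L-\mu)x\|^2=\mu\,x^T\Gamma x$, for instance by Cauchy–Schwarz on $x^T(L-\mu)x$. The same condition, restricted to $\ker Q(\mu)$, then shows the pencil is semisimple, so that geometric and algebraic multiplicities coincide. Together with the simple zero eigenvalue, this gives one zero eigenvalue, $2n-1$ positive eigenvalues, and a diagonalizable $H$.
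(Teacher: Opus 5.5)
\textbf{There is a genuine gap, and it cannot be closed: for unequal $\beta_i$ the realness and diagonalizability claims of the lemma are false.}

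Your reduction to $\tilde H$ is correct. So is $\det(\mu I-\tilde H)=\det Q(\mu)$ with $Q(\mu)=(L-\mu I)^2-\mu\Gamma$, and so is simplicity of the zero eigenvalue via $\mathbf 1^TQ'(0)\mathbf 1=-\mathbf 1^T\Gamma\mathbf 1<0$.

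The realness step is where it fails. Pairing with $x^*$ gives the real-coefficient scalar equation $s\mu^2-(2\ell+g)\mu+\|Lx\|^2=0$. Cauchy--Schwarz ($\ell^2\le s\|Lx\|^2$) bounds its discriminant from above, not below, so $b=0$ does not follow. This identity only yields $\mathrm{Re}\,\mu>0$ for $\mu\neq 0$, and no other route can do better.

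Counterexample: take $n=2$ with one edge, so $L^2=2L$, and $\Gamma=\mathrm{diag}(\gamma_1,\gamma_2)$. Then
\[
\det Q(\mu)=\mu\bigl(\mu^3-(4+\sigma)\mu^2+(4+2\sigma+\pi)\mu-2\sigma\bigr),\qquad \sigma=\gamma_1+\gamma_2,\ \pi=\gamma_1\gamma_2 .
\]
\begin{itemize}
\item For $\beta_1=5$, $\beta_2=1/20$ (so $\gamma_1=0.1$, $\gamma_2=10$) the cubic is $\mu^3-14.1\mu^2+25.2\mu-20.2$. Its critical points $\mu=1$ and $\mu=8.4$ have values $-8.1$ and about $-210.7$. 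Hence it has only one real root, and $H$ has a pair of nonreal eigenvalues.
\item For $\gamma_1=0.1$, $\gamma_2=1$ the cubic is $(\mu-1)(\mu^2-4.1\mu+2.2)$, which has three distinct real roots. So for some $\gamma_2\in(1,10)$ the cubic has a real double root $\mu_0>0$. Since $Q(\mu_0)=0$ would force $\mu_0=1$ and $\gamma_1=\gamma_2=1$, $\ker Q(\mu_0)$ is one-dimensional, and $H$ has a $2\times 2$ Jordan block.
\end{itemize}
Hence neither the positive definite symmetrizer $S$ nor the condition $x^TQ'(\mu)x\neq 0$ can be established in general.

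\textbf{What is needed.} The lemma requires the extra hypothesis $\beta_1=\cdots=\beta_n$, i.e.\ $\Gamma=\gamma I$. The paper's proof uses this tacitly. It conjugates $H$ by $W\otimes I_2$ and obtains $2\times 2$ blocks $A+\eta_kB$ with characteristic polynomial $\zeta^2-(2\eta_k+\gamma)\zeta+\eta_k^2$, whose discriminant is $4\eta_k\gamma+\gamma^2>0$. That conjugation step is valid only when $A_i$ and $B_i$ do not depend on $i$.

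Under that hypothesis your framework closes at once. For orthonormal eigenvectors $w_k$ of $L$, $Q(\mu)w_k=\bigl((\eta_k-\mu)^2-\gamma\mu\bigr)w_k$. Each scalar quadratic has two distinct real roots: $0$ and $\gamma$ for $\eta_1=0$, and two positive roots for $\eta_k>0$. The vectors $(w_k,(\eta_k-\mu)w_k)$ then give $2n$ independent eigenvectors of $\tilde H$.

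Without the hypothesis, the correct conclusion is weaker: a simple zero eigenvalue and $2n-1$ eigenvalues with positive real part. Your argument already proves exactly this.
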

\begin{proof}
	According to \cite{yuFrequencySynchronizationPower2021b}, there exists an orthogonal matrix $W$ such that 
	$$W^{-1}LW=diag(\eta_1,\eta_2,\cdots,\eta_n).$$ Further, 
	$$\begin{array}{l}
		{\rm{    }}({W^{ - 1}} \otimes {I_2})H(W \otimes {I_2})\\
		= diag({A_1} + {\eta _1}B_1,{A_2} + {\eta _2}B_2, \cdots ,{A_n} + {\eta _n}B_n),
	\end{array}$$
	in which the determinant and the characteristic polynomial of block $i$ are as follow 
	$$det(A_i+\eta_iB_i)=\eta_i^2\ge 0,$$
	$$\zeta_i^2-(2\eta_i+\frac{1}{2\beta_i})\zeta_i+\eta_i^2$$
	such that $(A_i+r_1\eta_1B_i)$ has two different nonnegative eigenvalues $\zeta_{i1}$ and $\zeta_{i2}$, i.e., $\zeta_{ij}=\frac{(2\eta_i+\frac{1}{2\beta_i})\pm \sqrt{(2\eta_i+\frac{1}{2\beta_i})^2-4\eta_i^2}}{2}$ with $j\in\{1,2\}$.
	Given that $\eta_1=0$, if and only if $i=1$, $det(A_1+\eta_1B_1)=0$ and $rank(A_1+\eta_1 B_1)=1$. Thus, $rank(H)=2n-1$ and 0 is a simple eigenvalue of the matrix $H$ with algebraic multiplicity 1.
Thus, there must exist an invertible matrix $P_i$ such that 
	$$P_i^{-1}(A_i+r_1\eta_1I_2)P_i=diag(\zeta_{i1}, \zeta_{i2}).$$ 
	Denote $P=diag(P_1,P_2,\cdots,P_n)$. From this, it can be concluded that there must be an invertible matrix $(W\otimes I_2)P$ such that $((W\otimes I_2)P)^{-1}H(W\otimes I_2)P$ is a diagonal matrix. Then, Lemma \ref{lemma2} is proved.$\hfill\blacksquare$
\end{proof}
\par Thus, we design a DED scheme with a reset mechanism as follow
\begin{subequations}
	\begin{equation}
		\label{11a}
		u=-r_1\xi-r_2\nu,
	\end{equation}
	\begin{equation}
		\label{11b}
		\left\{\begin{array}{l}
			\dot \nu_i =\xi_i, \,\,\quad if\quad (\xi_i,t)\in \bigcap\limits_{j = 1}^2 {\cal F}_{ij}\\
			\nu _i ({t^ + }) = \Gamma_i\nu_i, if\quad (\xi_i,t)\in \bigcup\limits_{j = 1}^2 {\cal J}_{ij},
		\end{array} \right.
	\end{equation}
\end{subequations}
where $r_1,r_2>0$ are control gains, $\nu=[\nu_1,\nu_2,\cdots,\nu_n]$, ${\cal F}_{ij}=\{(\xi_{ij},t)|\xi_{ij}\neq 0\}$ and ${\cal J}_{ij}=\{(\xi_{ij},t)|\xi_{ij}=0\}$ are the flow set and jump set, $\Gamma_i=[\gamma_{ij}]$ and $\gamma_{ij}=\left\{\begin{array}{l}
	1,\,\,if\quad \xi_{ij}\in {\cal F}_{ij}\\
	0,\,\,if\quad \xi_{ij}\in {\cal J}_{ij}
\end{array}\right.$, (\ref{11b}) is so-called a reset mechanism based on the Clegg integrator \cite{cleggNonlinearIntegratorServomechanisms1958}. Denote $Y=[y\quad\nu]^T$. Thus, a compact form of the base system is
\begin{equation}
	\label{12}
	\dot Y=-\Phi Y,
\end{equation}
where $\Phi=\left[\begin{matrix}
	r_1H & r_2I \\
	-H & 0
\end{matrix}\right]$. The designed DED scheme with a PI+R controller for isolated BESSs is shown in Fig. \ref{pir}.
\begin{remark}
		\label{rem2}
		The essence of controller design is that, as described in Fig. \ref{pir}, when the proportional term undergoes a zero crossing, the integral term is reset to 0. In this way, the integral term always maintains the same sign as the proportional term. On the one hand, the introduction of integral terms improves control accuracy. On the other hand, the reset mechanism ensures that $\Vert u_i\Vert\ge\Vert r_1\xi_i\Vert$, thereby accelerating \textcolor{blue}{the system convergence}.
\end{remark}
\begin{figure}
	\centering
	\includegraphics[width=8cm]{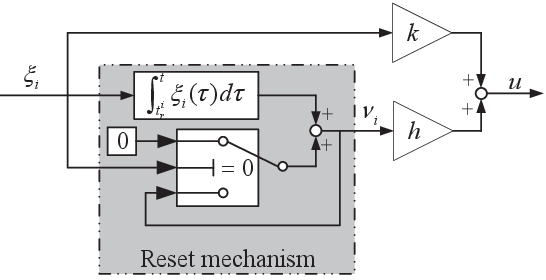} \caption{The designed DED algorithm with reset mechanism\label{pir}}
\end{figure}
\par For the convenience of subsequent analysis, \textcolor{blue}{we} first provide a definition and properties of reset control systems.
\begin{definition}
	\cite{banosResetControlSystems2011} Consider a reset control system 
	$$\left\{\begin{array}{l}
		\dot x=Fx, \qquad x\in \cal M\\
		x(t^+)=F_rx, x\notin \cal M,
	\end{array}\right.$$ 
	where $F$ is a time invariant matrix, $\cal M$ is a reset surface. The first equation is called the base system. If the base system is stable, or unstable and its solution is zero at some instant, the reset control system is recognized to be regular. In particular, if the base system possesses complex dominant modes, the zero crossing based reset control system behave regular with at least one reset instant. 
	\label{Definition 1}
\end{definition}
\par From Definition \ref{Definition 1}, complex dominant modes, i.e., complex dominant eigenvalues, promote the reset mechanism to function.
\par Derivation of (\ref{10}) yields that
\begin{equation}
	\label{13}
	\dot \xi=H\dot y=-r_1H\xi-r_2H\nu
\end{equation}
Next, a compact form is given as follow
\begin{equation}
	\label{14}
	\dot z=-\Phi_0 z,
\end{equation}
where $z=[\xi\quad \nu]^T$, $\Phi_0=\left[\begin{matrix}
	r_1H & r_2H \\
	-I & 0
\end{matrix}\right]$. 
\par Next, the stability of the base system (\ref{12}) is given in Lemma \ref{lemma3} and analyzed in its proof.
\begin{lemma}
	\label{lemma3}
	Under \textcolor{blue}{Lemmas} \ref{lemma1} and \ref{lemma2}, the base system (\ref{12}) is globally asymptotic stable. That is, the system (\ref{11a}) with (\ref{11b}) is regular. And the balance between supply and demand can be well maintained.
\end{lemma}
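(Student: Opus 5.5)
Our plan is to work in the eigenbasis of $H$ from Lemma \ref{lemma2} and reduce $\Phi$ to $2\times 2$ blocks. By Lemma \ref{lemma2} there is an invertible $T=(W\otimes I_2)P$ with $T^{-1}HT=\mathrm{diag}(\zeta_{11},\zeta_{12},\ldots,\zeta_{n1},\zeta_{n2})$. Here exactly one eigenvalue, $\zeta_{11}=0$, vanishes and the remaining $2n-1$ are positive. Applying the similarity $\mathrm{diag}(T,T)$ to $\Phi$ and permuting coordinates, $\Phi$ becomes block diagonal with blocks
\begin{equation*}
\Phi_\zeta=\left[\begin{matrix} r_1\zeta & r_2\\ -\zeta & 0\end{matrix}\right],
\end{equation*}
one for each eigenvalue $\zeta$ of $H$. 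The characteristic polynomial of each block is $s^2-r_1\zeta s+r_2\zeta$, so the eigenvalues of $-\Phi_\zeta$ solve $s^2+r_1\zeta s+r_2\zeta=0$.

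For $\zeta>0$, both coefficients are positive, so by Routh--Hurwitz both roots have negative real part. The roots are complex exactly when $r_1^2\zeta<4r_2$. We will record this condition because, by Definition \ref{Definition 1}, it is the condition that produces complex dominant modes and makes the zero-crossing reset active.

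For $\zeta=0$, the block is $\left[\begin{smallmatrix}0&r_2\\0&0\end{smallmatrix}\right]$. It has a double zero eigenvalue and is a Jordan block, so $-\Phi$ is not Hurwitz on the full $(y,\nu)$ space. This is the main obstacle, and we will handle it by restricting to the relevant invariant subspace.

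\begin{itemize}
\item First, left-multiply by the left null vector $w^T$ of $H$, i.e. $w^TH=0$. From the structure of $A_i$ and $B_i$, this vector selects $\sum_i\rho_i$ up to the $\lambda$-components. This gives $\frac{d}{dt}w^Ty=-r_2w^T\nu$.
\item Second, note that $\dot\nu=\xi=Hy$ implies $\frac{d}{dt}w^T\nu=w^THy=0$.
\item Third, with the initialization $\nu(0)=0$, and because resets only set components of $\nu$ to zero, we will argue that $w^T\nu$ stays zero. Then $w^Ty$ is invariant, and this invariance is exactly the supply--demand balance $\sum_i\rho_i=\sum_iD_i-\sum_iP_i=0$ once $\rho_i(0)=D_i-P_i(0)$ is imposed.
\end{itemize}

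On the complementary invariant subspace, every mode of $-\Phi$ has strictly negative real part. Hence $(\xi,\nu)$ decays exponentially: equivalently, in (\ref{14}), $\Phi_0$ restricted to $\mathrm{range}(H)\times\cdot$ is Hurwitz. So $y$ converges to the kernel of $H$, which gives $\lambda_i=\lambda_j$ and $\rho_i\to0$. This is the global asymptotic stability claimed for the base system, understood modulo the conserved quantity.

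Regularity then follows from Definition \ref{Definition 1}. The base system is stable in this sense, and under $r_1^2\zeta_{\max}<4r_2$ its dominant modes are complex, so zero crossings of $\xi_{ij}$ occur and at least one reset happens.

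The delicate point we expect is making the "stable modulo the zero Jordan block" argument rigorous without contradicting the literal claim of global asymptotic stability. We would first try to phrase the result on the invariant affine set $w^T\nu=0$, with $w^Ty$ fixed at its initial value.
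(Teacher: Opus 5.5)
Your treatment of the base flow (\ref{12}) is essentially the paper's argument. The paper also factors $|\varsigma I-\Phi|$ through the eigenvalues $\zeta_{ij}$ of $H$ and uses $\nu(0)=0$ to suppress the growing mode of the zero Jordan block. It proves balance by showing that $1_n^T\mathrm{diag}([\frac{1}{2\beta_i},1])\nu$ has zero derivative and zero initial value, which is precisely $w^T\nu\equiv 0$ for your left null vector $w_i=[\frac{1}{2\beta_i},1]^T$. Your version is tidier, because $(0,w^T)$ and $(w^T,0)$ are the left eigenvector and generalized left eigenvector of $\Phi$ at zero. This makes rigorous the paper's terse step ``according to $\nu_i(0)=0_2$'' and gives the correct Jordan structure (geometric multiplicity one means $\mathrm{rank}(\Phi)=4n-1$, not the $4n-3$ written in the paper).

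The step that fails is in your third bullet: $w^T\nu$ does \emph{not} stay zero ``because resets only set components of $\nu$ to zero.'' The reset (\ref{11b}) is componentwise: at $t_r$ only those $\nu_{ij}$ with $\xi_{ij}=0$ are zeroed. So $w^T\nu$ jumps by $-\sum w_{ij}\nu_{ij}(t_r)$ over the reset components, where $w_{i1}=\frac{1}{2\beta_i}$ and $w_{i2}=1$. This jump is generically nonzero, since $\nu_{ij}(t_r)$ is the integral of a sign-definite $\xi_{ij}$ since its last reset. After such a jump $\frac{d}{dt}w^Ty=-r_2w^T\nu\neq 0$, and $\sum_i(P_i+\rho_i)$ is no longer conserved. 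The paper makes exactly this point in (\ref{16}) and Remark \ref{remark2}, and Case 1 shows the resulting supply--demand mismatch. Only the centralized reset (\ref{17}), $\nu(t^+)=0_{2n}$, restores $w^T\nu(t_r^+)=0$. So the balance part of this lemma can only be established, as the paper does, for the reset-free base flow; drop the reset clause rather than try to prove it. Also state the conclusion precisely: conservation gives $\sum_i(P_i+\rho_i)=\sum_iD_i$ for all $t$, not $\sum_i\rho_i=0$. The balance $\sum_iP_i=\sum_iD_i$ holds only in the limit, via $\rho_i\to0$.
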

\begin{proof}
	According to Lemma \ref{lemma2},  the characteristic polynomial of $\Phi$ is
	$$\begin{array}{l}
		\vert \varsigma I-\Phi\vert=\vert \varsigma^2I-r_1H\varsigma+r_2H\vert\\
		\qquad \quad \,\,\,\;=\prod_{i=1}^{n}\prod_{j=1}^{2}(\varsigma^2-r_1\zeta_{ij}\varsigma+r_2\zeta_{ij}),
	\end{array}$$
	which gives two zero eigenvalues and $(4n-2)$ nonzero eigenvalues with positive real parts.
Based on Lemma \ref{lemma1}, it is easy to know that $rank(\Phi)=4n-3$, which means that zero is eigenvalues of $\Phi$ with geometric multiplicity 1 and algebraic multiplicity 2. Thus, there must exist an invertible matrix $S$ such that a Jordan matrix can be obtained as follow
	$$J=S^{-1}\Phi S$$
	\par The solution of (\ref{12}) is $Y=Sexp(-Jt)S^{-1}Y(0)$. Assume that $v_1$ and $v_2$, respectively, are the first and second columns of $S$, and $v_1$ is the vector associated with zero eigenvalue. And it is easily obtained that
	$$v_1=span\{[1,0]^T\otimes 1_n^T\otimes [1,0]^T\}.$$
	Thus, $\lim\limits_{t\to+\infty}Y=[v_1,v_1,0_{4n\times (4n-2)}]S^{-1}Y(0)$. According to $\nu_i(0)=0_2$, one can get $\lim\limits_{t\to+\infty}Y\in span\{v_1\}$. This indicates that the base system (\ref{12}) is globally asymptotic stable. And $\lim\limits_{t\to+\infty} \vert \lambda_i-\lambda_j\vert=0$, $\lim\limits_{t\to+\infty} \nu_i=0_2$ and $\lim\limits_{t\to+\infty} \rho_i=0$ with $i,j\in\{1,2,\cdots,n\}$. According \textcolor{blue}{to} Definition \ref{Definition 1}, the system (\ref{11a}) with (\ref{11b}) is regular.
	\par Noting that $diag([\frac{1}{2\beta_i},1])H=L\otimes[0\;1]$, for any $i\in \{1,2,\cdots,n\}$, we have
	$$\begin{array}{l}
		1_n^T(\dot P+\dot\rho)=1_n^Tdiag([\frac{1}{2\beta_i},1])\dot y\\
		\qquad \qquad\;\;\,=-1_n^Tdiag([\frac{1}{2\beta_i},1])(r_1\xi+r_2\nu)\\
		\qquad \qquad\;\;\,=-1_n^Tr_2diag([\frac{1}{2\beta_i},1])\nu
	\end{array}.$$ 
	Besides, there must be
	$$\begin{array}{l}
		r_21_n^Tdiag([\frac{1}{2\beta_i},1])\dot\nu=r_21_n^Tdiag([\frac{1}{2\beta_i},1])\xi\\
		\qquad\qquad\qquad\qquad\;\;=r_21_n^T(L\otimes[0\;1])y=0
	\end{array},$$
	which implies 
	$$r_21_n^Tdiag([\frac{1}{2\beta_i},1])\nu=r_21_n^Tdiag([\frac{1}{2\beta_i},1])\nu(0)=0.$$
	Therefore, there must be
	$$1_n^T(\dot P+\dot \rho)=0,$$
	which implies $1_n^T(P+\rho)=\sum_{i=1}^{n}D_i$ and $\lim\limits_{t\to+\infty} \sum_{i=1}^{n}P_i=\sum_{i=1}^{n}D_i$. The balance between supply and demand can be well maintained.
	So far, Lemma \ref{lemma3} is proved. $\hfill\blacksquare$
\end{proof}
\par Thus, the parameter conditions for a enabled reset mechanism are given in Theorem \ref{TH1} and analyzed in its proof.
\begin{theorem}
	\label{TH1}
	Under \textcolor{blue}{Lemmas} \ref{lemma1}, \ref{lemma2} and \ref{lemma3}, if $k$, $r_1$ and $r_2$ satisfy (\ref{15}), all nonzero eigenvalues of $\Phi_0$ are complex with positive real parts. At this point, the zero crossing based reset mechanism acts.
\end{theorem}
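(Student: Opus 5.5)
The plan is to reduce the spectrum of $\Phi_0$ to a family of scalar quadratics by using the diagonalizability of $H$ from Lemma \ref{lemma2}. Let $T=(W\otimes I_2)P$ be the invertible matrix from that proof, so that $T^{-1}HT=\mathrm{diag}(\zeta_{ij})$. Then $\mathrm{diag}(T,T)$ block-diagonalizes $\Phi_0$. The resulting blocks are the $2\times 2$ matrices
$$\left[\begin{matrix} r_1\zeta_{ij} & r_2\zeta_{ij}\\ -1 & 0\end{matrix}\right].$$
Consequently
$$\vert \varsigma I-\Phi_0\vert=\prod_{i=1}^{n}\prod_{j=1}^{2}\left(\varsigma^2-r_1\zeta_{ij}\varsigma+r_2\zeta_{ij}\right),$$
which agrees with the factorization already used in the proof of Lemma \ref{lemma3}. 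The sign convention follows from $\dot z=-\Phi_0 z$.

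Next, consider the single zero eigenvalue $\zeta_{11}=0$ of $H$. It contributes the factor $\varsigma^2$, that is, the two zero eigenvalues of $\Phi_0$. These are excluded by the statement, since it concerns only nonzero eigenvalues.

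For each of the $2n-1$ positive eigenvalues $\zeta_{ij}>0$ of $H$ (Lemma \ref{lemma2}), the quadratic $\varsigma^2-r_1\zeta_{ij}\varsigma+r_2\zeta_{ij}$ has roots
$$\varsigma=\frac{r_1\zeta_{ij}\pm\sqrt{r_1^2\zeta_{ij}^2-4r_2\zeta_{ij}}}{2}.$$
Since the sum of the roots is $r_1\zeta_{ij}>0$ and their product is $r_2\zeta_{ij}>0$, both roots always have positive real part. They are a genuinely complex conjugate pair exactly when $r_1^2\zeta_{ij}<4r_2$. The condition therefore needs to hold uniformly, namely $r_1^2\zeta_{\max}<4r_2$, where $\zeta_{\max}=\max_{i,j}\zeta_{ij}$.

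I would then identify $\zeta_{\max}$, or an upper bound for it, using the closed form from Lemma \ref{lemma2}:
$$\zeta_{ij}=\tfrac12\left(2\eta_i+\tfrac{1}{2\beta_i}+\sqrt{\left(2\eta_i+\tfrac{1}{2\beta_i}\right)^2-4\eta_i^2}\right)\le 2\eta_i+\tfrac{1}{2\beta_i}.$$
This gives the bound $\zeta_{\max}\le k:=2\eta_n+\max_i\frac{1}{2\beta_i}$, or the exact maximum if (\ref{15}) defines $k$ that way. Condition (\ref{15}) is presumably $r_1^2k<4r_2$, and under it every nonzero eigenvalue is complex with positive real part. I expect the main obstacle to be this step: matching the paper's quantity $k$ in (\ref{15}) to a valid uniform bound on the spectrum of $H$. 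The index pairing between $\eta_i$ and $\beta_i$ in the proof of Lemma \ref{lemma2} is somewhat loose, so I would fall back on the crude bound $\zeta_{ij}\le\|H\|$ or the estimate above if needed.

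Finally, I would invoke Definition \ref{Definition 1}. Once all nonzero modes of the $\xi$-dynamics (\ref{14}) are complex, the dominant modes are complex and oscillatory. Hence each component $\xi_{ij}$ of the solution of the base system crosses zero, and the zero-crossing reset set $\mathcal J_{ij}$ is reached. Together with the regularity established in Lemma \ref{lemma3}, this shows that the reset mechanism actually acts.
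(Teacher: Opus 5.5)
Your proposal is correct and follows essentially the same route as the paper: factor $\vert\varsigma I-\Phi_0\vert$ into the quadratics $\varsigma^2-r_1\zeta_{ij}\varsigma+r_2\zeta_{ij}$, require a negative discriminant for every positive $\zeta_{ij}$, and then appeal to Definition~\ref{Definition 1} to conclude that a reset occurs. In the paper, (\ref{15}) is exactly your exact-maximum version, $\frac{4r_2}{r_1^2}>\max\{\zeta_{ij}\}$, and the $k$ in the statement plays no role; your bound $2\eta_n+\max_i\frac{1}{2\beta_i}$ would simply give a slightly more conservative sufficient condition.
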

\begin{proof}
	Looking back at the matrix $\Phi_0$, one can get that
	$$\begin{array}{l}
		\vert\varsigma I-\Phi_0\vert=\vert \varsigma^2I-r_1H\varsigma+r_2H\vert \\
		\qquad \quad \,=\prod_{i=1}^{n}\prod_{j=1}^{2}(\varsigma^2-r_1\zeta_{ij}\varsigma+r_2\zeta_{ij})
	\end{array},$$
	which gives one zero eigenvalue with geometric multiplicity 2 and algebraic multiplicity 1 and $4n-2$ nonzero eigenvalues with negative real parts, i.e., $\frac{r_1\zeta_{ij}\pm \sqrt{r_1^2\zeta_{ij}^2-4r_2\zeta_{ij}}}{2}$. To ensure that $\xi$ based reset mechanism acts, a sufficient condition is given here
	$${r_1^2\zeta_{ij}^2-4r_2\zeta_{ij}}<0$$
That is, 
	\begin{equation}
		\label{15}
		\begin{array}{l}
			\frac{4r_2}{r_1^2}>max\{\zeta_{ij}\}=\frac{(2\eta_i+\frac{1}{2\beta_i})+ \sqrt{(2\eta_i+\frac{1}{2\beta_i})^2-8\eta_i^2}}{2}
		\end{array}.
	\end{equation}
	Accordingly, MCs can reach consensus with at least one reset instant under (\ref{15}). That is, the system is regular and the reset mechanism acts.$\hfill\blacksquare$
\end{proof}
\par For the system (\ref{11a}) and (\ref{11b}) with an enabled zero crossing based reset mechanism, the stability is reanalyzed in Theorem \ref{TH2} and its proof.
\begin{theorem}
	\label{TH2}
	Under \textcolor{blue}{Lemmas} \ref{lemma1}, \ref{lemma2} and \ref{lemma3}, MCs and locally power mismatch, respectively, can reach consensus. 
\end{theorem}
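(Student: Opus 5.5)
The plan is to treat the closed loop (\ref{11a})--(\ref{11b}) as a reset control system in the sense of Definition \ref{Definition 1}. We build one Lyapunov-type function that does not increase along the flows of the base system and also does not increase at the reset instants. We then show that the only limit set compatible with $\xi=0$, $\nu=0$ is the consensus set.

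\textbf{Step 1: work in $(\xi,\nu)$ and use the spectral structure of $H$.} By Lemma \ref{lemma2}, $H=T\Lambda T^{-1}$ with $\Lambda=diag(0,\zeta_{ij})$, the nonzero $\zeta_{ij}>0$. In the flow set the dynamics are (\ref{14}). Set $\bar\xi=T^{-1}\xi$ and $\bar\nu=T^{-1}\nu$. Then (\ref{14}) decouples into scalar second-order modes
\[
\dot{\bar\xi}_k=-\zeta_k(r_1\bar\xi_k+r_2\bar\nu_k),\qquad \dot{\bar\nu}_k=\bar\xi_k .
\]
For $\zeta_k>0$, use the mode energy $V_k=\bar\xi_k^2+r_2\zeta_k\bar\nu_k^2$. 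Its derivative is $\dot V_k=-2r_1\zeta_k\bar\xi_k^2\le 0$. For the zero mode, $\bar\xi_1$ is constant. Since $\xi=Hy$ lies in the range of $H$, $\bar\xi_1\equiv 0$, so this mode carries no energy. Take $V=\sum_k V_k$, which equals $\xi^TQ\xi+r_2\nu^TR\nu$ for suitable positive (semi)definite $Q,R$ built from $T$.

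\textbf{Step 2: behaviour at resets.} At a jump, components $\nu_{ij}$ with $\xi_{ij}=0$ are set to $0$ and $\xi$ is continuous. The difficulty is that the coordinate change $T$ mixes components, so zeroing entries of $\nu$ need not decrease $\nu^TR\nu$. I would first try to choose the energy in original coordinates. The identity $diag([\frac{1}{2\beta_i},1])H=L\otimes[0\;1]$ from the proof of Lemma \ref{lemma3} suggests weighting $\nu$ by $G=diag([\frac{1}{2\beta_i},1])$, so that $R$ is diagonal. Zeroing any entry then strictly decreases $\nu^TR\nu$ unless that entry was already zero. If this fails, I would fall back on the standard reset-systems argument (the $H_\beta$-type condition of \cite{banosResetControlSystems2011}): show $\Delta V\le0$ on the jump set $\{\xi_{ij}=0\}$ using that the reset only acts on components orthogonal to the flow of $\xi$. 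I expect this step to be the main obstacle.

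\textbf{Step 3: conclude.} With $V$ nonincreasing on flows and jumps, use the dwell time and regularity from Lemma \ref{lemma3} and Theorem \ref{TH1}, which rule out Zeno behaviour. The hybrid invariance principle then places trajectories in the largest invariant set contained in $\{\xi=0\}$. There $\dot\xi=0$ forces $H\nu=0$, and the conservation $1_n^TG\nu=0$ shown in Lemma \ref{lemma3} gives $\nu=0$. Hence $\xi=Hy\to0$, i.e.\ $y\to\ker H=span\{1_n\otimes[1,0]^T\}$. This yields $\lambda_i-\lambda_j\to0$ and $\rho_i\to0$.

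\textbf{Step 4: balance across resets.} Finally, check that $1_n^T(P+\rho)=\sum_i D_i$ survives the resets. This requires $1_n^TG\nu(t^+)=0$ after each jump, which is where the centralized reset mentioned in the abstract would be invoked if needed.
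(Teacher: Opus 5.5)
Your Step~2 is a genuine gap. You flag it as the main obstacle but do not close it, and the repair you propose cannot work.

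\textbf{Why the jump step fails.} The mode energy of Step~1 weights $\nu$ by $r_2R$ with $R=T^{-T}\Lambda T^{-1}$, and $R$ is not diagonal. Zeroing one entry $\nu_k$, which is what (\ref{11b}) does component by component, changes $V$ by $-r_2\bigl(R_{kk}\nu_k^2+2\nu_k\sum_{l\neq k}R_{kl}\nu_l\bigr)$. This has no sign.

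A diagonal weight is impossible. For $V=\xi^TQ\xi+r_2\nu^TR\nu$ along (\ref{14}) one gets $\dot V=-2r_1\xi^TQH\xi+2r_2\nu^T(R-H^TQ)\xi$. Unless you use some constraint tying $\nu$ to $\xi$, the $\nu$-linear term must vanish for every $\xi\in\mathrm{range}(H)$, so $RH=H^TQH$ must be symmetric. With $R=\mathrm{diag}(R_k)$ this reads $R_kH_{kl}=R_lH_{lk}$. Write $H_{\rho_i\lambda_j}$ for the entry in row $\rho_i$, column $\lambda_j$:
\begin{itemize}
\item for a neighbour $j$ of $i$, $H_{\rho_i\lambda_j}=a_{ij}/(2\beta_i)$ but $H_{\lambda_j\rho_i}=0$, so $R_{\rho_i}=0$;
\item then $H_{\lambda_i\rho_i}=-1$ and $H_{\rho_i\lambda_i}=-d_i/(2\beta_i)$ force $R_{\lambda_i}=0$.
\end{itemize}
Thus $R=0$, hence $H^TQH=0$ and $V\equiv 0$ on the reachable set. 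Note also that $\mathrm{diag}([\frac{1}{2\beta_i},1])$ in the identity from Lemma~\ref{lemma3} is the $n\times 2n$ map $y\mapsto(\frac{\lambda_i}{2\beta_i}+\rho_i)_i$, not a diagonal weight.

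The $H_\beta$ fallback is only named. The componentwise zero-crossing law (\ref{11b}) is not of the form that condition treats, and ``the reset acts on components orthogonal to the flow of $\xi$'' is not an argument. So no function nonincreasing across jumps is produced, and the invariance principle in Step~3 has nothing to act on.

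\textbf{The missing idea.} The paper's proof rests on sign alignment. Because $\nu_{ij}$ is zeroed whenever $\xi_{ij}$ crosses $0$ and otherwise integrates $\xi_{ij}$, between consecutive crossings it starts from $0$ and integrates a signal of fixed sign. So $\xi_{ij}\nu_{ij}\ge 0$ for all $t$ (Remark~\ref{rem2}), and $u=-(r_1\xi+r_2\nu)$ has, componentwise, the sign of the proportional input and at least its magnitude.

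The paper pairs this with a Lyapunov function of $y$ alone, $V=\max_i\lambda_i-\min_i\lambda_i+\max_i\rho_i-\min_i\rho_i$. Only $\nu$ jumps, so this $V$ is continuous across resets and no jump inequality is needed. All the work goes into the set-valued Lie derivative along flows plus LaSalle. Putting $\nu$ inside $V$, as you do, is exactly what creates the obstacle you could not remove.

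Be aware that the paper only asserts the flow inequality, and it is not automatic. Take $n=2$, $\beta_1=\beta_2=\frac{1}{4}$, $\lambda=(1,0)$, $\rho=(0,0)$ and the paper's initialization $\nu=0$. Then the paper's $V$ initially grows at rate $2r_1$, so that route also needs real work.

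\textbf{Further defects in Steps~3--4.}
\begin{itemize}
\item The system (\ref{11b}) has no dwell time; it first appears in (\ref{18b})--(\ref{18c}). Regularity in the sense of Definition~\ref{Definition 1} does not exclude Zeno behaviour.
\item The identity $1_n^T\mathrm{diag}([\frac{1}{2\beta_i},1])\nu=0$ holds only for the base system and is destroyed by partial resets, as the paper itself shows around (\ref{16}). On $\{\xi=0\}$, use instead that every component lies in the jump set, so $\nu^+=0$.
\item Step~4 analyses the centralized loop (\ref{17}), not the system of the theorem. Supply--demand balance is not part of the claim, and for (\ref{11b}) it in fact fails (Case~1).
\end{itemize}
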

\begin{proof}
	Define a candidate Lyapunov function as follow
	$$V=\mathop{max}\limits_{1\le i\le n}(\lambda_i )-\mathop{min}\limits_{1\le i\le n}(\lambda_i )+\mathop{max}\limits_{1\le i\le n}(\rho_i)-\mathop{min}\limits_{1\le i\le n}(\rho_i).$$
	Naturally, there must be $V\ge 0$ with equality if \textcolor{blue}{and} only if $\lambda_1=\cdots=\lambda_n$ and $\rho_1=\cdots=\rho_n$.
	\par The set-value lie derivative is defined as follow
	$${\cal{L}}_y V=\partial V^T u,$$
	where $\partial V$ is called as the generalized of $V$. Thus, 
	$$\partial V=M-m,$$
	where
	$$M=[M_{1 1},M_{1 2},M_{2 1},M_{2 2},\cdots,M_{n 1},M_{n 2}]^T,$$
	$$m=[m_{1 1},m_{1 2},m_{2 1},m_{2 2},\cdots,m_{n 1},m_{n 2}]^T,$$
	\textcolor{blue}{$M_{i1}=1$, $M_{i2}=1$, $m_{i1}=1$, and $m_{i2}=1$ imply that $\xi_{i1}=max\{\xi_{11},\cdots,\xi_{n1}\}$, $\xi_{i2}=max\{\xi_{12},\cdots,\xi_{n2}\}$, $\xi_{i1}=min\{\xi_{11},\cdots,\xi_{n1}\}$, $\xi_{i2}=min\{\xi_{12},\cdots,\xi_{n2}\}$, respectively.} Given that $\nu_i$ is aligned with $\xi_i$, we have
	$${\cal{L}}_y V=\sum_{i=1}^{n}[m_{i1}-M_{i1}\ m_{i2}-M_{i2}] (r_1 \xi_i+r_2 \nu_i)\le 0.$$
	\par According to \textcolor{blue}{the} LaSalle's invariance principle, $y$ converges asymptotically to a strongly invariant set
	$$\{y_i|y_1=y_2=\cdots=y_n\}.$$
	At this point, $\xi_i$ and $\nu_i$ converge asymptotically to 0 for $i\in\{1,2,\cdots,n\}$. Besides, according to (\ref{lemma3}), the solution of (\ref{12}) is $\lim\limits_{t\to+\infty}Y\in span\{[1,0]^T\otimes 1_n^T\otimes [1,0]^T\}$. Thus, MCs and power mismatches, respectively, can reach consensus. Further, power mismatches asymptotically converge to 0. 
	So far, Theorem \ref{TH2} is proved. $\hfill\blacksquare$
\end{proof}
\section{Further disscuss}
\par The above only discusses the feasibility of the distributed PI+R solution to solve the convergence of MC and power mismatch. Next, we will discuss some improvement measures to address power supply and demand balance, reset Zeno behavior and time delay, and capacity limitations.
\subsection{Discussion on supply-demand balance}
\par According to Theorem \ref{TH2}, MCs and power mismatches of BESSs can reach consensus respectively. The supply-demand balance has not been accurately analyzed. Also, it can not be guaranteed by the above design that the power mismatch converges to 0. Next, the balance between supply and demand in the flow and the jump sets is illustrated respectively. Firstly, the following equation holds at any flow instant $\{t_r<t\le t_{r+1}\}$
$$\begin{array}{l}
	1_n^T(\dot P+\dot \rho)=-1_n^Tdiag([\frac{1}{2\beta_i},1])(r_1\xi+r_2\nu)\\
	\qquad\qquad=-r_21_n^Tdiag([\frac{1}{2\beta_i},1])\int_{t_r}^{t}{\xi d\tau}\\
	\qquad\qquad=-r_21_n^Tdiag([\frac{1}{2\beta_i},1])\nu(t_r^+)
\end{array}.$$
Secondly, at the first reset instant $t_1$, one can get
$$\begin{array}{l}
	1_n^T(\dot P+\dot \rho)=-r_21_n^Tdiag([\frac{1}{2\beta_i},1])\nu(t_1)=0
\end{array},$$
and
\begin{equation}
	\label{16}
	\lim\limits_{t\to +\infty}1_n^T(P+\rho)=D-r_21_n^Tdiag([\frac{1}{2\beta_i},1])\sum_{r=1}^{+\infty}\nu(t_r) (t_{r+1}-t_r).
\end{equation}
Under the reset mechanism, one can get
$$r_21_n^Tdiag([\frac{1}{2\beta_i},1])\nu(t_r^+)\neq 0,$$
which is because at the reset instant, one or some individuals are reset, resulting in inaccurate power mismatch estimation at this instant. At any subsequent moment, power mismatches also cannot be guaranteed accurate estimation. Thereby, the supply and demand balance is unknown.
\begin{remark}
	\label{remark2}
	Due to the reset mechanism, it is difficult to mathematically describe the solution of (\ref{12}) in detail. Undoubtedly, it can be seen that the reset mechanism damages the balance of supply and demand. Thus, the ED problem of isolated BESSs cannot be effectively solved under this scheme. As described in the proof of Theorem \ref{TH2}, the final state of a cluster system cannot be accurately described when only one or some individuals are reset at certain instants. Similar issue is also described in \cite{mengResetControlSynchronization2019}. In view of this, an improved reset mechanism is needed. The simplest and most direct method is centralized, as shown below,
	\begin{equation}
		\label{17}
		\left\{\begin{array}{l}
			\dot \nu_i =\xi_i, \,\,\quad if\quad (\xi_i,t)\in \bigcap\limits_{j = 1}^2 {\cal F}_{ij}\\
			\nu ({t^ + }) = 0_{2n}, if\quad (\xi_i,t)\in \bigcup\limits_{j = 1}^2 {\cal J}_{ij},
		\end{array} \right..
	\end{equation}
	Hence, in each flow set, $\rho$ gradually approaches $0_n$. That is, $\rho$ asymptotically \textcolor{blue}{converges} to $0_n$. Combined with (\ref{16}), the balance between power supply and demand can be well maintained.
\end{remark}
\subsection{Discussion on the Zeno behavior and time delay}
\par To avoid the Zeno behavior, a dwell time is given here, i.e.,
\begin{subequations}
	\begin{equation}
		\label{18a}
		\dot u=-r_1\xi-r_2\nu,
	\end{equation}
	\begin{equation}
		\label{18b}
		\left\{\begin{array}{l}
			\dot \nu_i =\xi_i, \,\,\qquad\,\,\, if\quad (\xi_i,t,T_i)\in \bigcap\limits_{j = 1}^2 {\cal F}_{ij}\\
			\nu ({t^ + }) = 0_{2n}, if\quad (\xi_i,t,T_i)\in \bigcup\limits_{j = 1}^2 {\cal J}_{ij},
		\end{array} \right.
	\end{equation}
	\begin{equation}
		\label{18c}
		\left\{\begin{array}{l}
			\dot T_i=1, \,\,\qquad\,\,\,\,\, if\quad (\xi_i,t,T_i)\in \bigcap\limits_{j = 1}^2 {\cal F}_{ij}\\
			T_i(t^+)=\Gamma_iT_i, if\quad (\xi_i,t,T_i)\in \bigcup\limits_{j = 1}^2 {\cal J}_{ij},
	    \end{array} \right.
	\end{equation}
\end{subequations}
where ${\cal F}_{ij}=\{(\xi_{ij},t)|\xi_{ij}\neq 0\vee T_{ij}<\rho_{ij}\}$, ${\cal J}_{ij}=\{(\xi_{ij},t)|\xi_{ij}=0\land T_{ij}\ge\rho_{ij}\}$, $\Gamma_i=[\gamma_{ij}]$, $\gamma_{ij}=\left\{\begin{array}{l}
	1,\,\,if\quad \xi_{ij}\in {\cal F}_{ij}\\
	0,\,\,if\quad \xi_{ij}\in {\cal J}_{ij}
\end{array}\right.$.
\par Avoiding Zeno behavior not only improves the practicality of the algorithm, but also alleviates the stability problems caused by input delay.
\begin{remark}
	\label{remark3}
	Consider an uniform delay $\tau$. Thus, applying the Artstein’s transformation \cite{zhangDistributedFiniteTimeMultiagent2019, artsteinLinearSystemsDelayed1982}, we have
	\begin{equation}
		\dot y=u(t-\tau)
	\end{equation}
	Applying the Artsteins transformation, we have
	\begin{equation}
		\dot y^\tau=u,
	\end{equation}
	where $y^\tau=y+\int_{t-\tau}^{t}u^\tau(s)ds$. There are two conditions that need to be met: (1) the initial state $y_i(0)$ for $\forall i \in \{1,2,\cdots,n\}$ is bounded, and (2) $u_i$ for $\forall i \in \{1,2,\cdots,n\}$ is integrable. Due to the stable output power of a normally operating MG, the initial state is naturally guaranteed to be bounded. Due to the reset mechanism with a dwell time excluding Zeno behavior, \textcolor{blue}{the} Riemannian integrability is also guaranteed.
\end{remark}
\subsection{Discussion on the capacity limitation}
\par Consider that each BESS has the capacity limitation $P_{m,i}\le P_i\le P_{M,i}$ for $i\in\{1,2,\cdots,n\}$, where $P_{m,i}$ and $P_{M,i}$ are the upper and lower power limits of BESS $i$, respectively. Thus, considering \eqref{7} for BESS $i$ at some instant, we have
\begin{equation}
	\label{pl}
	P_i=\left\{\begin{array}{l}
		P_{m,i},\;if\;\frac{\lambda_i}{2\beta_i}-\frac{\alpha_i}{2\beta_i}<	P_{m,i},\\
		\frac{\lambda_i}{2\beta_i}-\frac{\alpha_i}{2\beta_i},\;if\;P_{m,i}\le \frac{\lambda_i}{2\beta_i}-\frac{\alpha_i}{2\beta_i}\le P_{M,i},\\
		P_{M,i},\;if\;\frac{\lambda_i}{2\beta_i}-\frac{\alpha_i}{2\beta_i}>P_{M,i}.
	\end{array}\right.
\end{equation}
Define a set that ${\cal M}=\{i|P_{m,i}\le \frac{\lambda_i}{2\beta_i}-\frac{\alpha_i}{2\beta_i}\le P_{M,i},\;i\in\{1,2,\cdots,n\}\}$. During the discharge process of BESS, $P_{m,i}\ge 0$ and $P_{M,i}\ge 0$. $P_{m,i}=P_{M,i}=0$ implies that bus $i$ is a pure load node in the network.
\par To ensure power supply and demand balance under capacity constraints, we propose a revised MC scheme by time trigger as follow.
	\par Firstly, we define the following variables
\begin{subequations}
	\label{22}
	\begin{equation}
		\Delta P_i=\left\{\begin{array}{l}
			\frac{\lambda_i-\beta_i}{2\alpha_i}-{\tilde P}_i,\;if\;i\notin{\cal M},\\
			0,\;\mathrm{otherwise},
		\end{array}\right.
	\end{equation}
	\begin{equation}
		\Delta \alpha_i=\left\{\begin{array}{l}
			0,\;if\;i\notin{\cal M},\\
			\frac{1}{2\alpha_i},\;\mathrm{otherwise}.
		\end{array}\right.
	\end{equation}
\end{subequations}
In a compact form, $\Delta P_i$ and $\Delta \alpha_i$ for $i\in\{1,2,\cdots,n\}$ \textcolor{blue}{reach} consensus, respectively, by
\begin{equation}
	\label{24}
	\left\{\begin{array}{l}
		\Delta \dot P=-h_1\xi_P-h_2\nu_P\\
		\Delta \dot\alpha=-h_1\xi_\alpha-h_2\nu_\alpha
	\end{array}\right.
\end{equation}
where $\xi^P=L\Delta P$, $\xi^\alpha=L\Delta \alpha$,
\begin{subequations}
	\begin{equation}
		\left\{\begin{array}{l}
			\nu_P^{+}=0_n,\;if\;\xi_{P,i}\nu_{P,i}<0\;for\;\exists i,\\
			\dot \nu_P=\xi_P,\;otherwise,
		\end{array}\right.
	\end{equation}
	\begin{equation}
		\left\{\begin{array}{l}
			\nu_\alpha^{+}=0_n,\;if\;\xi_{\alpha,i}\nu_{\alpha,i}<0\;for\;\exists i,\\
			\dot \nu_\alpha=\xi_\alpha,\;otherwise.
		\end{array}\right.
	\end{equation}
\end{subequations}
In this way, MC is modified as
\begin{equation}
	\label{27}
	\bar\lambda_i=\lambda_i+\frac{\Delta P_i}{\Delta \alpha_i},\;i\in\{1,2,\cdots,n\}.
\end{equation}
Accordingly, the output power of BESS can be calculated by
\begin{equation}
	\label{28}
	\left\{\begin{array}{l}
		\tilde P_i=\frac{\bar\lambda_i-\beta_i}{2\alpha_i},\;if\;i\notin{\cal M},\\
		\tilde P_i=P_{m,i}\;\mathrm{or}\;P_{M,i},\;\mathrm{otherwise}.
	\end{array}\right.
\end{equation}
\begin{theorem}
	Under Assumption \ref{assum 1}, if $\frac{4h_2}{h_1^2}>\eta_n$, $\Delta P_i$ and $\Delta\alpha_i$ for $i\in\{1,2,\cdots,n\}$ can reach average consensus with at least one reset instant by \eqref{24}, respectively. Under the above, \eqref{27}, and \eqref{28}, the output power of BESS will not exceed the limit, and the balance between supply and demand can be maintained.
\end{theorem}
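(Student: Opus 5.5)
The proof splits into two parts: consensus of the auxiliary variables under \eqref{24}, and then the feasibility and balance properties of \eqref{27}--\eqref{28}.

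\textbf{Consensus of \eqref{24}.} The two equations are decoupled and identical in structure, so it suffices to treat $\Delta P$. Between resets the base system in $(\xi_P,\nu_P)$ reads
\[
\dot \xi_P = L\Delta\dot P = -h_1 L\xi_P - h_2 L\nu_P, \qquad \dot\nu_P=\xi_P .
\]
This is exactly the structure of \eqref{14}, with $H$ replaced by $L$ and $(r_1,r_2)$ replaced by $(h_1,h_2)$. Diagonalize $L$ by the orthogonal $W$ of Lemma \ref{lemma1}. The characteristic polynomial then factors as $\prod_{i}(\varsigma^2+h_1\eta_i\varsigma+h_2\eta_i)$.
\begin{itemize}
\item For $\eta_i>0$ the roots are $\frac{-h_1\eta_i\pm\sqrt{h_1^2\eta_i^2-4h_2\eta_i}}{2}$.
\item The condition $\frac{4h_2}{h_1^2}>\eta_n\ge\eta_i$ makes every discriminant negative.
\item Hence all nonzero modes are complex with negative real part $-h_1\eta_i/2$.
\end{itemize}
As in Lemma \ref{lemma3}, the zero eigenvalue comes from $\eta_1=0$. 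Its eigenvector direction is $1_n$, and $1_n^T\xi_P=0$ annihilates it. Therefore the base system is asymptotically stable in the disagreement subspace, so the reset system is regular (Definition \ref{Definition 1}). Because the dominant modes are complex, the zero-crossing condition $\xi_{P,i}\nu_{P,i}<0$ is met at least once, exactly as argued in Theorem \ref{TH1}.

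\textbf{Average preservation.} Average consensus follows from $1_n^T\Delta\dot P=-h_1 1_n^TL\Delta P-h_2 1_n^T\nu_P=-h_2 1_n^T\nu_P$. Since $\dot\nu_P=L\Delta P$ in the flow, $1_n^T\nu_P$ is constant. It starts at $0$ and is reset to $0_n$ centrally, so $1_n^T\nu_P\equiv0$. Hence $1_n^T\Delta P$ is invariant across both flows and jumps, which is the reason for the centralized reset as in Remark \ref{remark2}. Convergence of the disagreement to zero follows from the Lyapunov/max--min argument of Theorem \ref{TH2}: $V=\max_i\Delta P_i-\min_i\Delta P_i$ is nonincreasing, since after each reset $\nu_P$ has the sign of $\xi_P$. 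Combining this with the stable complex modes gives $\Delta P_i\to\frac1n\sum_j\Delta P_j(0)$, and likewise for $\Delta\alpha$.

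\textbf{Feasibility and balance.} Let $\bar P=\frac1n 1_n^T\Delta P$ and $\bar\alpha=\frac1n 1_n^T\Delta\alpha$.
\begin{itemize}
\item Saturated units, $i\notin\mathcal M$, are fixed by \eqref{28} at $P_{m,i}$ or $P_{M,i}$, so they respect their limits.
\item For unsaturated units, substituting the consensus values into \eqref{27} gives $\bar\lambda_i=\lambda_i+\bar P/\bar\alpha$. The power change of unit $i$ is then its share $\frac{1}{2\alpha_i}\cdot\bar P/\bar\alpha$. Summing over $i\in\mathcal M$, $\sum_{i\in\mathcal M}\frac{1}{2\alpha_i}=n\bar\alpha$.
\item The total correction therefore equals $n\bar P=\sum_i\Delta P_i$, i.e.\ exactly the power cut from the saturated units.
\end{itemize}
So $\sum_i\tilde P_i=\sum_iP_i=\sum_iD_i$, using the balance from Lemma \ref{lemma3} and Remark \ref{remark2}. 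Moreover, every unit in $\mathcal M$ receives the same MC shift, so MC consensus among them is preserved.

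\textbf{Main obstacle.} The hard step is showing that the redistributed powers of units in $\mathcal M$ stay within their limits. This is not automatic: a single correction can push new units into saturation. I would handle it by the time-triggered iteration, recomputing $\mathcal M$ at each trigger. The set of saturated units is monotone nondecreasing and bounded by $n$, so after finitely many triggers no new unit saturates. Feasibility requires $\sum_iP_{m,i}\le\sum_iD_i\le\sum_iP_{M,i}$, and I would make this implicit assumption explicit. At that point \eqref{28} yields $P_{m,i}\le\tilde P_i\le P_{M,i}$ for all $i$ together with the balance above.
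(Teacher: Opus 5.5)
Your analysis of \eqref{24} and your single-trigger accounting are correct. They are also far more explicit than the paper's proof, which only appeals to ``the previous analysis method'' for consensus and to \cite{Chen2018} for feasibility and balance. The factorization $\prod_i(\varsigma^2+h_1\eta_i\varsigma+h_2\eta_i)$, the invariance of $1_n^T\Delta P$ via $1_n^T\nu_P\equiv 0$ under the centralized reset, and the identity $\sum_{i\in\mathcal M}\frac{1}{2\alpha_i}\cdot\frac{\bar P}{\bar\alpha}=\sum_i\Delta P_i$ are all right.

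One step should be tightened. The flow set only enforces $\xi_{P,i}\nu_{P,i}\ge 0$, so a maximizing agent with $\xi_{P,i}=0$ may carry $\nu_{P,i}<0$ and move upward. Your max--min function is therefore not obviously nonincreasing. Use $W=\Delta P^TL\Delta P$ instead. On the flow set $\dot W=-2h_1\Vert\xi_P\Vert^2-2h_2\xi_P^T\nu_P\le-2h_1\eta_2W$, and $W$ does not jump at resets. The disagreement then decays exponentially whatever the reset pattern.

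The genuine gap is in your resolution of the feasibility obstacle.

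(i) Monotonicity of the saturated set is asserted, not derived. If $\mathcal M$ is recomputed from its definition after a common MC shift, a unit clamped at $P_{m,i}$ can re-enter $\mathcal M$ after an upward shift, and symmetrically at $P_{M,i}$. You need an explicit once-clamped-always-clamped rule.

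(ii) With that rule, the conclusion fails when violations occur on both sides, even under $\sum_iP_{m,i}\le\sum_iD_i\le\sum_iP_{M,i}$. Take two units with $\lambda_i=P_i+\alpha_i$, $\alpha_1=\tfrac12$, $\alpha_2=2$, limits $[0,1]$ and $[1,3]$, and $\sum_iD_i=2.5$. The unconstrained dispatch is $(2,0.5)$, so both units are clamped at the first trigger and $\mathcal M=\emptyset$. Then $\bar\alpha=0$ makes \eqref{27} undefined, and \eqref{28} delivers total power $2\neq 2.5$. Yet $(1,1.5)$ is feasible, and in fact optimal.

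(iii) The identity $\sum_i\tilde P_i=\sum_iP_i=\sum_iD_i$ holds only at the first trigger, where the powers from \eqref{7} sum to demand. If a later trigger recomputes $\Delta P_i$ from $\bar\lambda_i$ through \eqref{22}, the excess of units clamped earlier is counted a second time. You must state that excesses are measured against the current dispatch, or cumulatively against the original unconstrained powers.

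Your iteration does work when all first-trigger violations lie on one side. Every shift then has the same sign and new violations stay on that side. An empty $\mathcal M$ would then force $\sum_iD_i>\sum_iP_{M,i}$ (or $<\sum_iP_{m,i}$), contradicting feasibility. For two-sided violations you need a release rule plus a separate termination argument, or, as the paper does, the result of \cite{Chen2018}.
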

\begin{proof}
	According to the previous analysis method, the average consensus of $\Delta P_i$ for $i\in\{1,2,\cdots,n\}$ generated from the last monitoring can be guaranteed by \eqref{24}, \textcolor{blue}{and the same applies to $\Delta\alpha_i$}. According to the result of \cite{Chen2018} and the above, the output power of each BESS does not exceed the limit, the balance between supply and demand can gradually be ensured by \eqref{27} and \eqref{28}.$\hfill\blacksquare$
\end{proof}
\begin{remark}	
	Looking back at the definition of MC, i.e., $\lambda_i=2\beta_iP_i+\alpha_i$ with $\beta_i>0$ and $\alpha_i>0$ for $i\in\{1,2,\cdots,n\}$, we can find that MC is positively linearly correlated with power. That is to say, as the power increases/decreases, the MC also increases/decreases accordingly. This may lead to inconsistency in the actual MC. Equivalently, an MC consensus scheme can be designed, and the power calculated using \eqref{pl} needs to meet the supply-demand balance. It is worth noting that the consensus value of MCs in this sense is different from the scheme without capacity constraints. How to ensure MC consensus in the latter solution is a question worth studying.
\end{remark}
\section{Case study}
Consider \textcolor{blue}{an} MG containing $4$ BESSs governed by \textcolor{blue}{an} MAS. \textcolor{blue}{The single line diagram of this BESS network and their equipped communication topology are shown in Fig. \ref{figc}}. Accordingly, $\eta_n(L)=4$. Four cases are designed here to test the fully distributed scheme, the distributed scheme with a centralized reset mechanism. Further, the designed Zeno-free controllers with and without input delays are \textcolor{blue}{simulated, respectively}. The details are as follows.
\begin{figure}
	\centering
	\includegraphics[width=8cm]{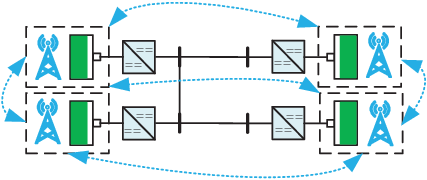} \caption{Communication graph topology.\label{figc}}
\end{figure}
\par In Case 1, the effectiveness of the proposed fully DED controller (\ref{11a}) with a reset mechanism (\ref{11b}) is verified under $r_1=1$ and $r_2=1.5$. At this point, the inequality (\ref{15}) in Theorem \ref{TH1} holds.
 \par In Case 2, the effectiveness of the proposed DED controller (\ref{11a}) with a centralized reset mechanism (\ref{17}) designed in Remark \ref{remark2} \textcolor{blue}{is} verified under $r_1=1$, $r_2=1.5$ and $T_{ij}=0$.
\par In Case 3, the effectiveness of the proposed Zeno-free DED controller (\ref{18a}) with a centralized reset mechanism (\ref{18b}) and (\ref{18c}) \textcolor{blue}{is} verified under $r_1=1$, $r_2=1.5$ and $T_{ij}=0.5$.
\par In Case 4, based on the simulation in Case 3, a delay vector $\tau=[0.3,0.4,0.5,0.45]^T$ is introduced under $r_1=1$, $r_2=1.5$ and $T_{ij}=0.5$ to verify the designed reduced method in Remark \ref{remark3}.
\par In Case 5, considering capacity limits $P_M=[6\;2.6\;4\;7]^T$ and $P_m=[0\;0\;0\;0]^T$, the modified MC scheme is tested to meet supply and demand balance. 
\par In Case 6, based on Case 5, the plug and play function is tested, which indirectly verifies that the designed \textcolor{blue}{scheme} can solve the ED problem of networks with pure load nodes.
\subsection{Case 1: Validation of the distributed PI+R scheme based on zero crossing}
\par This case is arranged to validate the fully distributed PI+R scheme, where the zero crossing based distributed reset mechanism is adopted. The simulation results are shown in Fig. \ref{fig2} and \textcolor{blue}{Fig.} \ref{fig3}.
\begin{figure}
	\centering
	\subfigure[Evolution of MC and power mismatch of each BESS\label{fig1a}]{\includegraphics[width=8cm]{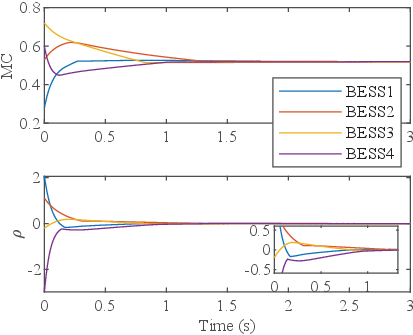}}
	\subfigure[Evolution of integral term of each BESS and total output power\label{fig1b}]{\includegraphics[width=8cm]{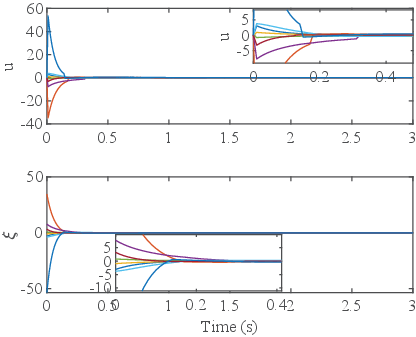}}
	\caption{Evolution of MC, power mismatch integral term of each BESS and total output power in Case 1\label{fig2}}
\end{figure}
\begin{figure}
	\centering
	\subfigure[Evolution of integral term of each BESS\label{fig1c}]{\includegraphics[width=8cm]{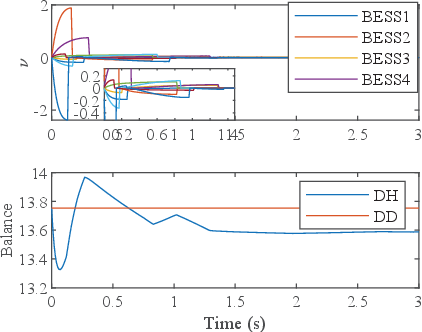}}
	\subfigure[Evolution of output power of each BESS and total output power\label{fig1d}]{\includegraphics[width=8cm]{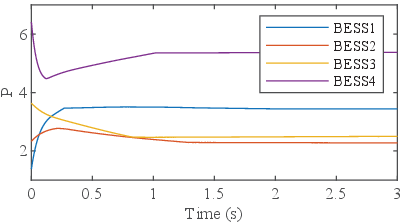}}
	\caption{Evolution of integral term, output power of each BESS in Case 1\label{fig3}}
\end{figure}
\par From Fig. \ref{fig2}, MCs can achieve consensus, while power mismatch, control error and input of each BESS can reach 0. From Fig. \ref{fig1c}, each integral term is reset to 0 several times. Although the output power of each BESS is stable, as shown in Fig. \ref{fig1d}, the imbalance between total output power and demand is obvious. This method is not applicable to isolated BESSs.
\subsection{Case 2: Validation of the PI+R scheme based on a centralized approach}
\par To adapt to isolated BESSs, a centralized method in (\ref{17}) is introduced and simulated in this case. The simulation results are shown in Fig. \ref{fig4} and \ref{fig5}.
\par From Fig. \ref{fig4}, MCs can achieve consensus, while power mismatch, control error and input of each BESS can reach 0. From Fig. \ref{fig2c}, each integral term is reset to 0 several times, and the output power of each is stable as shown in Fig. \ref{fig2d}. \textcolor{blue}{Comparing} Fig. \ref{fig2c} with Fig. \ref{fig1c} in Case 1, the balance between total output power and demand is well maintained under this centralized method. This indicates that this scheme can solve the ED problem for BESSs while ensuring supply and demand balance.
\par To illustrate the progressiveness, the designed P controller in \cite{yuFrequencySynchronizationPower2021b} is used for comparison, and its simulation results are shown in Fig. \ref{fig41}. From the simulation results of the two schemes, it can be seen that the designed PI+R scheme exhibits obvious characteristics of fast convergence speed and high control accuracy. This validates the discussion in Remark \ref{rem2}.
\begin{figure}
	\centering
	\subfigure[Evolution of MC and power mismatch of each BESS\label{fig2a}]{\includegraphics[width=8cm]{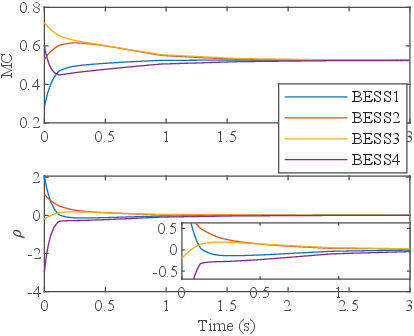}}
	\subfigure[Evolution of integral term of each BESS and total output power\label{fig2b}]{\includegraphics[width=8cm]{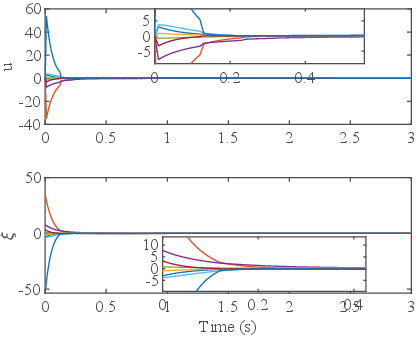}}
	\caption{Evolution of MC, power mismatch, integral term of each BESS and total output power in Case 2\label{fig4}}
\end{figure}
\begin{figure}
	\centering
	\subfigure[Evolution of integral term of each BESS\label{fig2c}]{\includegraphics[width=8cm]{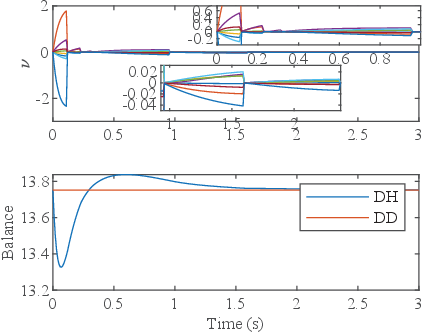}}
	\subfigure[Evolution of output power of each BESS and total output power\label{fig2d}]{\includegraphics[width=8cm]{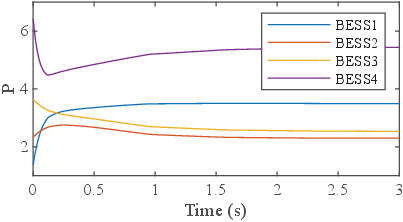}}
	\caption{Evolution of integral term, output power of each BESS in Case 2\label{fig5}}
\end{figure}
\begin{figure}
	\centering
	\includegraphics[width=8cm]{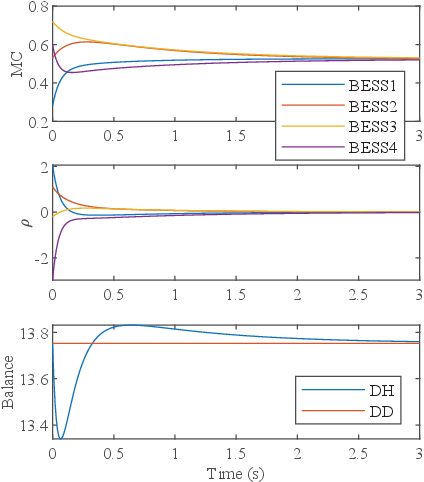}
	\caption{Evolution of MC, power mismatch, and total output power under the control scheme in \cite{yuFrequencySynchronizationPower2021b} in Case 2\label{fig41}}
\end{figure}
\subsection{Case 3: Validation of the designed Zeno-free PI+R scheme based on a centralized approach}
\par To verify the proposed Zeno-free scheme, a minimum dwell time $T_{ij}=0.5$ is introduced. The simulation results are shown in Fig. \ref{fig6} and \textcolor{blue}{Fig.} \ref{fig7}. 
\par It can be observed from Fig. \ref{fig6} that MCs can reach consensus, and power mismatch, control input and error of each BESS converge to $0$. From Fig. \ref{fig7}, integral term of each BESS converges to $0$, while the supply and demand balance achieves, and the output power of each BESS is stable.
\par \textcolor{blue}{Comparing} Fig. \ref{fig7} with Fig. \ref{fig5}, reset instants of each integral term are delayed, and the total output overshoot increases under $T_{ij}=0.5$. To ensure the dynamic performance of the system, the dwell time $T_{ij}$ should not be too large.
\begin{figure}
	\centering
	\subfigure[Evolution of MC and power mismatch of each BESS\label{fig3a}]{\includegraphics[width=8cm]{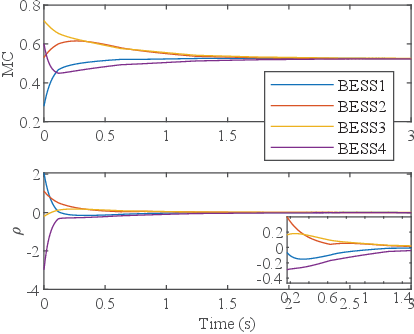}}
	\subfigure[Evolution of integral term of each BESS and total output power\label{fig3b}]{\includegraphics[width=8cm]{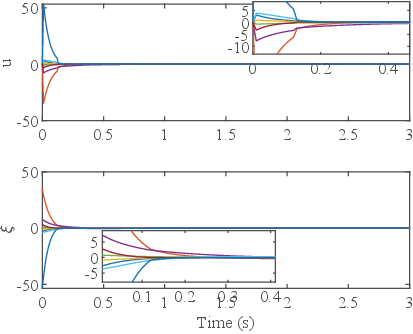}}
	\caption{Evolution of MC, power mismatch integral term of each BESS and total output power in Case 3\label{fig6}}
\end{figure}
\begin{figure}
	\centering
	\subfigure[Evolution of integral term of each BESS\label{fig3c}]{\includegraphics[width=8cm]{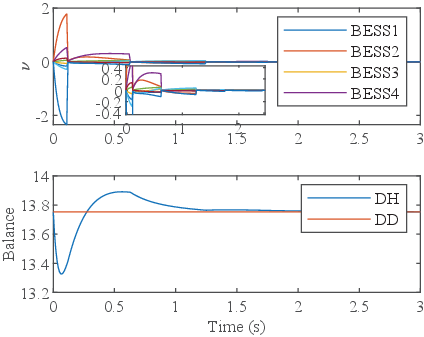}}
	\subfigure[Evolution of output power of each BESS and total output power\label{fig3d}]{\includegraphics[width=8cm]{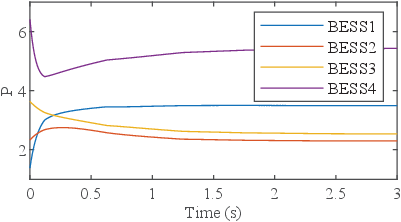}}
	\caption{Evolution of integral term, output power of each BESS in Case 3\label{fig7}}
\end{figure}
\subsection{Case 4: Validation of the PI+R scheme under input delay based on a centralized approach}
\par To test the proposed Zeno-free scheme with time delays, a time delay vector $\tau=[0.3,0.4,0.5,0.45]^T$ is introduced. The simulation results are shown in Fig. \ref{fig8} and \ref{fig9}. 
\par It can be observed from Fig. \ref{fig8} that despite some oscillations, MCs can reach consensus, and power mismatch, control input and error of each BESS converge to $0$. From Fig. \ref{fig9}, integral term of each BESS converges to $0$ with several reset instants, and the supply and demand balance achieves, and the output power of each BESS is stable.
\par \textcolor{blue}{Comparing} Fig. \ref{fig9} with Fig. \ref{fig7}, the oscillation amplitude of each integral term, the output power of each BESS and the total output power increase. It is worth noting that although the total output power can converge to the total load faster, its oscillation amplitude increases and shows negative values. Further improvements may be needed to better address the ED problem of isolated BESSs
\begin{figure}
	\centering
	\subfigure[Evolution of MC and power mismatch of each BESS\label{fig4a}]{\includegraphics[width=8cm]{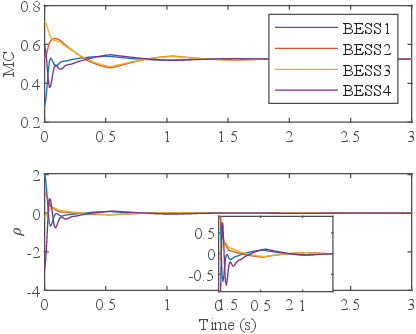}}
	\subfigure[Evolution of integral term of each BESS and total output power\label{fig4b}]{\includegraphics[width=8cm]{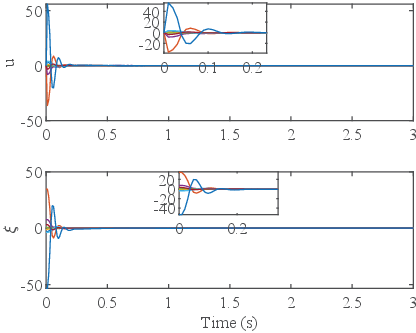}}
	\caption{Evolution of MC, power mismatch integral term of each BESS and total output power in Case 4\label{fig8}}
\end{figure}
\begin{figure}
	\centering
	\subfigure[Evolution of integral term of each BESS\label{fig4c}]{\includegraphics[width=8cm]{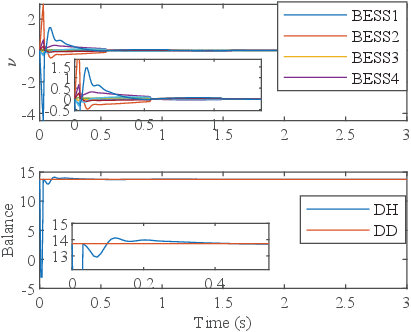}}
	\subfigure[Evolution of output power of each BESS and total output power\label{fig4d}]{\includegraphics[width=8cm]{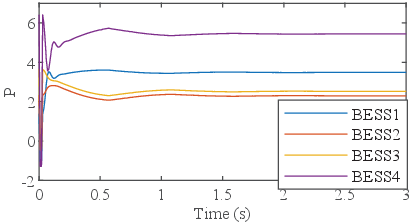}}
	\caption{Evolution of integral term, output power of each BESS in Case 4\label{fig9}}
\end{figure}
\subsection{Case 5: Validation of the PI+R scheme considering capacity limitations based on a centralized approach}
\par Based on the non capacity constrained \textcolor{blue}{version}, we arrange this case to investigate the designed MC correction \textcolor{blue}{solution}, in order to test the power supply and demand balance driven by the corrected MC scheme.
\par From the simulation results in Fig. \ref{fig10}, it can be seen that although the modified scheme caused inconsistency in the MC, the output power of each BESS does not exceed the limit. At the same time, the balance between power supply and demand can also be well maintained. The two variables $\Delta P_i$ and $\Delta\alpha_i$ for $i\in\{1,2,\cdots,n\}$ defined can also remain consensus. It is worth mentioning that the simulation results also indicate that after monitoring and updating the variables at $t=4s$, the supply-demand balance can be asymptotically achieved. In order to illustrate the progressiveness, we compared the schemes in \cite{Chen2018}.
\par From Fig. \ref{fig11}, it can be seen that after monitoring at $t=4s$, the supply-demand balance cannot be achieved. At $t=7s$, another monitoring is conducted, and the supply-demand balance gradually can be maintained thereafter.
\begin{figure}
	\centering
	\subfigure[Evolution of MC, power of each BESS, and the total power\label{fig5a}]{\includegraphics[width=8cm]{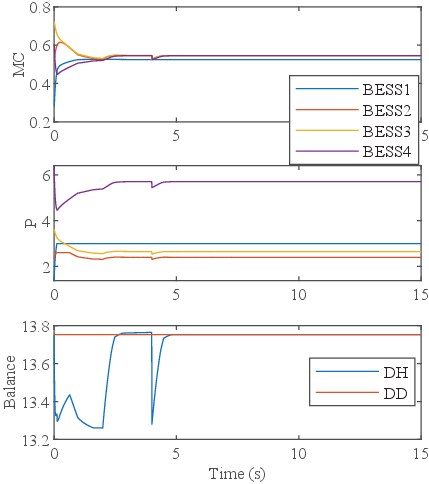}}
	\subfigure[Evolution of $\Delta P_i$ and $\Delta\alpha_i$\label{fig5b}]{\includegraphics[width=8cm]{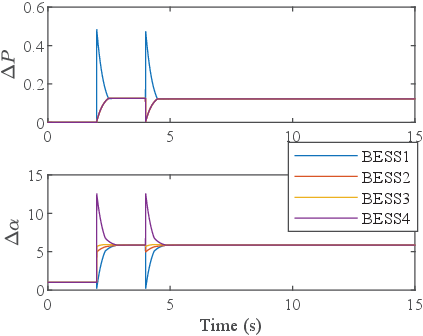}}
	\caption{Evolution of MC, power of each BESS, the total power, $\Delta P_i$, and $\Delta\alpha_i$ in Case 5\label{fig10}}
\end{figure}
\begin{figure}
	\centering
	\includegraphics[width=8cm]{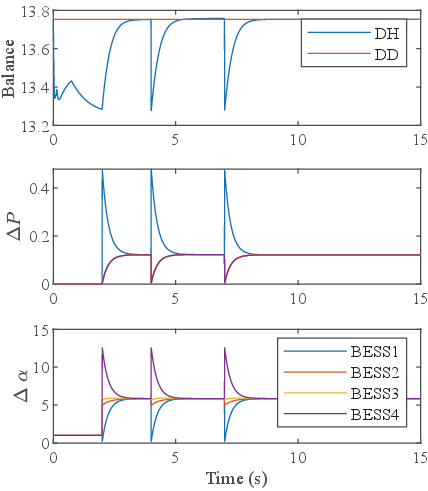}
	\caption{Evolution of the total power, $\Delta P_i$, and $\Delta\alpha_i$ in Case5\label{fig11}}
\end{figure}
\subsection{Case 6: Validation of the Plug and Play}
\par Considering the limitation of the State of Charge (SoC) level, when the SoC of a BESS is below a certain level, the output power drops to 0. In a sense, the capacity of this BESS decreases to 0, which implies this BESS exits. In this sense, plug and play scenarios are not only suitable for accidental and unplanned BESS exits, but also for dealing with BESS exits due to insufficient energy storage.
\par To verify the plug and play functionality, BESS 2 is removed at $t=10s$ and reconnected at $t=20s$ based on Case5. The simulation results are shown in Fig. \ref{fig12}.
\par From the simulation results, it can be seen that the designed solution can still achieve supply-demand balance, whether it is to connect or cut off a BESS. Meanwhile, the output power of each BESS does not exceed the limit. From another perspective, the designed solution is open, as it is also suitable for networks with more buses than BESS.
\begin{figure}
	\centering
	\subfigure[Evolution of MC, power of each BESS, and the total power\label{fig6a}]{\includegraphics[width=8cm]{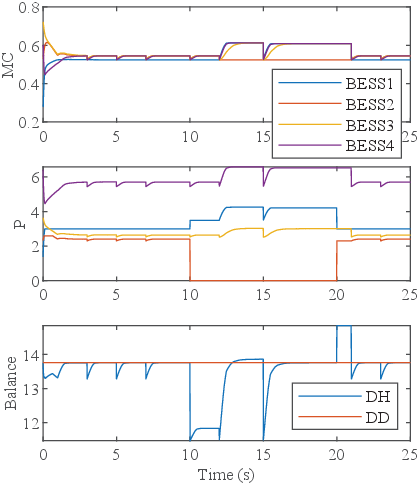}}
	\subfigure[Evolution of $\Delta P_i$ and $\Delta\alpha_i$\label{fig6b}]{\includegraphics[width=8cm]{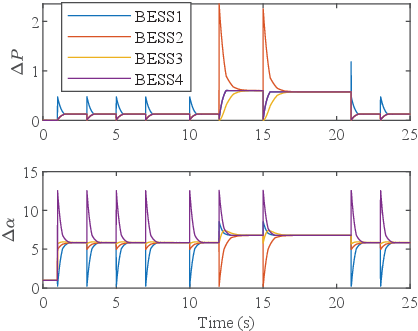}}
	\caption{Evolution of MC, power of each BESS, the total power, $\Delta P_i$, and $\Delta\alpha_i$ in Case 6\label{fig12}}
\end{figure}
\section{Conclusion}
In this article, we design a distributed ED scheme for a BESS network with PI+R controller to improve convergence speed and control accuracy while suppressing overshoot. Regularity, stability, and the enabling condition for the reset mechanism are derived. To ensure supply and demand balance, avoid Zeno behavior of reset mechanism, handle input delay, and adapt to limitations of capacity and SoC level, the above \textcolor{blue}{schemes} are modified by a centralized reset mechanism, a residence time, the reduced method, and a revised scheme, respectively. The simulation results show that the designed scheme is effective and efficient compared to existing results. However, the current reset mechanism is strictly centralized at the reset instants, and efforts may need to be made to design a more distributed reset mechanism in the future.
$\left[\begin{array}{ccc}
	1 & 2 & 3 
	\\
	4 & 3 & 4 
	\\
	3 & 5 & 3 
\end{array}\right]$
\appendices

\ifCLASSOPTIONcaptionsoff
\newpage
\fi
\small
\bibliographystyle{IEEEtran}
\bibliography{re1.bib}

@article{yong_capacity_2023,
	title = {Capacity Value of Uninterruptible Power Supply Storage},
	author = {Yong, Pei and Botterud, Audun and Zhang, Ning and Kang, Chongqing},
	year = {2023},
	month = mar,
	journal = {{IEEE} Transactions on Power Systems},
	volume = {38},
	number = {2},
	pages = {1763--1766},
	issn = {1558-0679},
	doi = {10.1109/TPWRS.2022.3233773},
}

@article{wangApplicationEnergyStorage2022,
	title = {Application of Energy Storage in Integrated Energy Systems \textemdash{} A Solution to Fluctuation and Uncertainty of Renewable Energy},
	author = {Wang, Wei and Yuan, Baoqiang and Sun, Qie and Wennersten, Ronald},
	year = {2022},
	month = aug,
	journal = {Journal of Energy Storage},
	volume = {52},
	pages = {104812},
	issn = {2352-152X},
	doi = {10.1016/j.est.2022.104812},
	urldate = {2023-03-13},
	langid = {english}
}

@article{solyaliComprehensiveStateoftheartReview2022,
	title = {A Comprehensive State-of-the-Art Review of Electrochemical Battery Storage Systems for Power Grids},
	author = {Solyali, Davut and Safaei, Babak and Zargar, Omid and Aytac, Gizem},
	year = {2022},
	month = oct,
	journal = {International Journal of Energy Research},
	volume = {46},
	number = {13},
	pages = {17786--17812},
	issn = {1099-114X},
	doi = {10.1002/er.8451},
	urldate = {2023-03-13},
	langid = {english}
}

@article{yangModellingOptimalEnergy2022,
	title = {Modelling and Optimal Energy Management for Battery Energy Storage Systems in Renewable Energy Systems: A Review},
	shorttitle = {Modelling and Optimal Energy Management for Battery Energy Storage Systems in Renewable Energy Systems},
	author = {Yang, Yuqing and Bremner, Stephen and Menictas, Chris and Kay, Merlinde},
	year = {2022},
	month = oct,
	journal = {Renewable and Sustainable Energy Reviews},
	volume = {167},
	pages = {112671},
	issn = {1364-0321},
	doi = {10.1016/j.rser.2022.112671},
	urldate = {2023-03-13},
	langid = {english}
}

@article{duanDistributedAlgorithmBased2021,
	title = {Distributed Algorithm Based on Consensus Control Strategy for Dynamic Economic Dispatch Problem},
	author = {Duan, Yuzhu and He, Xing and Zhao, You},
	year = {2021},
	month = jul,
	journal = {International Journal of Electrical Power \& Energy Systems},
	volume = {129},
	pages = {106833},
	issn = {0142-0615},
	doi = {10.1016/j.ijepes.2021.106833},
}

@article{ghanjatiOptimalSizingEnergy2022,
	title = {Optimal Sizing and Energy Management of a Stand-Alone Photovoltaic/Pumped Storage Hydropower/Battery Hybrid System Using Genetic Algorithm for Reducing Cost and Increasing Reliability},
	author = {Ghanjati, Chaima and Tnani, Slim},
	year = {2022},
	month = jul,
	journal = {Energy \& Environment},
	volume = {34},
	number = {6},
	pages = {2186-2203},
	publisher = {SAGE Publications Ltd STM},
	issn = {0958-305X},
	doi = {10.1177/0958305X221110529},
	urldate = {2023-03-13},
	langid = {english}
}

@article{rouholaminiReviewModelingManagement2022,
	title = {A Review of Modeling, Management, and Applications of Grid-Connected Li-Ion Battery Storage Systems},
	author = {Rouholamini, Mahdi and Wang, Caisheng and Nehrir, Hashem and Hu, Xiaosong and Hu, Zechun and Aki, Hirohisa and Zhao, Bo and Miao, Zhixin and Strunz, Kai},
	year = {2022},
	month = nov,
	journal = {IEEE Transactions on Smart Grid},
	volume = {13},
	number = {6},
	pages = {4505--4524},
	issn = {1949-3061},
	doi = {10.1109/TSG.2022.3188598}
}

@article{wangMultiagentbasedCollaborativeRegulation2022,
	title = {Multi-Agent-Based Collaborative Regulation Optimization for Microgrid Economic Dispatch under a Time-Based Price Mechanism},
	author = {Wang, Lingyun and An, Xiao and Xu, Honglei and Zhang, Yunning},
	year = {2022},
	month = dec,
	journal = {Electric Power Systems Research},
	volume = {213},
	pages = {108760},
	issn = {0378-7796},
	doi = {10.1016/j.epsr.2022.108760},
	urldate = {2023-02-10},
	langid = {english}
}

@article{yuFrequencySynchronizationPower2021b,
	title = {Frequency Synchronization and Power Optimization for Microgrids With Battery Energy Storage Systems},
	author = {Yu, Chang and Zhou, Hong and Lu, Xiaoqing and Lai, Jingang},
	year = {2021},
	month = sep,
	journal = {IEEE Transactions on Control Systems Technology},
	volume = {29},
	number = {5},
	pages = {2247--2254},
	publisher = {Ieee-Inst Electrical Electronics Engineers Inc},
	address = {Piscataway},
	issn = {1063-6536},
	doi = {10.1109/TCST.2020.3025606},
	urldate = {2022-10-01},
	langid = {english}
}

@article{tanExtensionsLocationalMarginal2022,
	title = {Extensions of the Locational Marginal Price Theory in Evolving Power Systems: A Review},
	shorttitle = {Extensions of the Locational Marginal Price Theory in Evolving Power Systems},
	author = {Tan, Zhenfei and Cheng, Tong and Liu, Yuchen and Zhong, Haiwang},
	year = {2022},
	month=apr,
	journal = {IET Generation, Transmission \& Distribution},
	volume = {16},
	number = {7},
	pages = {1277--1291},
	issn = {1751-8695},
	doi = {10.1049/gtd2.12381},
	urldate = {2023-03-13},
	langid = {english}
}

@article{wangDistributedOptimalPower2021,
	title = {Distributed Optimal Power Flow of DC Microgrids: A Penalty Based ADMM Approach},
	shorttitle = {Distributed Optimal Power Flow of DC Microgrids},
	author = {Wang, Manshang and Su, Yifan and Chen, Laijun and Li, Zhengming and Mei, Shengwei},
	year = {2021},
	month = mar,
	journal = {CSEE Journal of Power and Energy Systems},
	volume = {7},
	number = {2},
	pages = {339--347},
	publisher = {China Electric Power Research Inst},
	address = {Beijing},
	issn = {2096-0042},
	doi = {10.17775/CSEEJPES.2019.01650},
	urldate = {2023-03-15},
	langid = {english}
}

@article{liNewDistributedEnergy2021,
	title = {A New Distributed Energy Management Strategy for Smart Grid With Stochastic Wind Power},
	author = {Li, Wenjuan and Liu, Yungang and Liang, Huijun and Shen, Yanjun},
	year = {2021},
	month = feb,
	journal = {IEEE Transactions on Industrial Electronics},
	volume = {68},
	number = {2},
	pages = {1311--1321},
	issn = {1557-9948},
	doi = {10.1109/TIE.2020.2970627}
}

@article{hassanImprovedMantaRay2021,
	title = {An Improved Manta Ray Foraging Optimizer for Cost-Effective Emission Dispatch Problems},
	author = {Hassan, Mohamed H. and Houssein, Essam H. and Mahdy, Mohamed A. and Kamel, Salah},
	year = {2021},
	month = apr,
	journal = {Engineering Applications of Artificial Intelligence},
	volume = {100},
	pages = {104155},
	issn = {0952-1976},
	doi = {10.1016/j.engappai.2021.104155},
	urldate = {2023-03-14},
	langid = {english}
}

@article{ullahComputationallyEfficientConsensusBased2021,
	title = {A Computationally Efficient Consensus-Based Multiagent Distributed EMS for DC Microgrids},
	author = {Ullah, Md Habib and Babaiahgari, Bhanu and Alseyat, Anas and Park, Jae-Do},
	year = {2021},
	month = jun,
	journal = {IEEE Transactions on Industrial Electronics},
	volume = {68},
	number = {6},
	pages = {5425--5435},
	issn = {1557-9948},
	doi = {10.1109/TIE.2020.2992015}
}

@article{songCostBasedAdaptiveDroop2021,
	title = {Cost-Based Adaptive Droop Control Strategy for VSC-MTDC System},
	author = {Song, Sungyoon and McCann, Roy A. and Jang, Gilsoo},
	year = {2021},
	month = jan,
	journal = {IEEE Transactions on Power Systems},
	volume = {36},
	number = {1},
	pages = {659--669},
	issn = {1558-0679},
	doi = {10.1109/TPWRS.2020.3003589}
}

@article{sahooLocalizedEventDrivenResilient2021,
	title = {A Localized Event-Driven Resilient Mechanism for Cooperative Microgrid Against Data Integrity Attacks},
	author = {Sahoo, Subham and Peng, Jimmy Chih-Hsien},
	year = {2021},
	month = jul,
	journal = {IEEE Transactions on Cybernetics},
	volume = {51},
	number = {7},
	pages = {3687--3698},
	issn = {2168-2275},
	doi = {10.1109/TCYB.2020.2989225}
}

@article{zaeryNovelFullyDistributed2021,
	title = {A Novel Fully Distributed Fixed-Time Optimal Dispatch of DC Multi-Microgrids},
	author = {Zaery, Mohamed and Wang, Panbao and Wang, Wei and Xu, Dianguo},
	year = {2021},
	month = jul,
	journal = {International Journal of Electrical Power \& Energy Systems},
	volume = {129},
	pages = {106792},
	issn = {0142-0615},
	doi = {10.1016/j.ijepes.2021.106792},
	urldate = {2023-03-06},
	langid = {english}
}

@article{wangDisEHPPCEnablingHeterogeneous2022,
	title = {DisEHPPC: Enabling Heterogeneous Privacy-Preserving Consensus-Based Scheme for Economic Dispatch in Smart Grids},
	shorttitle = {DisEHPPC},
	author = {Wang, Aijuan and Liu, Wanping and Dong, Tao and Liao, Xiaofeng and Huang, Tingwen},
	year = {2022},
	month = jun,
	journal = {IEEE Transactions on Cybernetics},
	volume = {52},
	number = {6},
	pages = {5124--5135},
	issn = {2168-2275},
	doi = {10.1109/TCYB.2020.3027572}
}

@article{liDistributedControlStrategy2021,
	title = {Distributed Control Strategy for Global Economic Operation and Bus Restorations in a Hybrid AC/DC Microgrid with Interconnected Subgrids},
	author = {Li, Xiangke and Dong, Chaoyu and Jiang, Wentao and Wu, Xiaohua},
	year = {2021},
	month = oct,
	journal = {International Journal of Electrical Power \& Energy Systems},
	volume = {131},
	pages = {107032},
	issn = {0142-0615},
	doi = {10.1016/j.ijepes.2021.107032},
	urldate = {2023-03-06},
	langid = {english}
}

@article{pengDistributedPeriodicEventTriggered2022,
	title = {Distributed Periodic Event-Triggered Optimal Control of DC Microgrids Based on Virtual Incremental Cost},
	author = {Peng, Jiangkai and Fan, Bo and Tu, Zhenghong and Zhang, Wei and Liu, Wenxin},
	year = {2022},
	month = apr,
	journal = {IEEE/CAA Journal of Automatica Sinica},
	volume = {9},
	number = {4},
	pages = {624--634},
	issn = {2329-9274},
	doi = {10.1109/JAS.2022.105452}
}

@article{chenDistributedEconomicDispatch2021,
	title = {Distributed Economic Dispatch for Energy Internet Based on Multiagent Consensus Control},
	author = {Chen, Wushun and Li, Tao},
	year = {2021},
	month = jan,
	journal = {IEEE Transactions on Automatic Control},
	volume = {66},
	number = {1},
	pages = {137--152},
	issn = {1558-2523},
	doi = {10.1109/TAC.2020.2979749}
}

@book{spanglerPowerGenerationOperation2014,
	title = {Power Generation, Operation, and Control},
	author = {Robert and Shoults, Raymond},
	year = {2013},
	month = jul,
	publisher = {Wiley-Blackwell}
}

@article{zhaoDifferentialPrivacyEnergy2022,
	title = {Differential Privacy Energy Management for Islanded Microgrids With Distributed Consensus-Based ADMM Algorithm},
	author = {Zhao, Daduan and Zhang, Chenghui and Cao, Xiangyang and Peng, Chao and Sun, Bo and Li, Ke and Li, Yan},
	year = {2022},
	month = may,
	volume = {31},
	number = {3},
	journal = {IEEE Transactions on Control Systems Technology},
	pages = {1018--1031},
	issn = {1558-0865},
	doi = {10.1109/TCST.2022.3208456}
}

@article{chenDistributedCooperativeControl2021,
	title = {Distributed Cooperative Control for Economic Dispatch and SOC Balance in DC Microgrids with Vanadium Redox Batteries},
	author = {Chen, Shi and Gong, Qingwu and Lu, Xiaoqing and Lai, Jingang},
	year = {2021},
	month = dec,
	journal = {Sustainable Energy, Grids and Networks},
	volume = {28},
	pages = {100534},
	issn = {2352-4677},
	doi = {10.1016/j.segan.2021.100534},
	urldate = {2022-10-31},
	langid = {english}
}

@article{jinManageDistributedEnergy2022,
	title = {Manage Distributed Energy Storage Charging and Discharging Strategy: Models and Algorithms},
	shorttitle = {Manage Distributed Energy Storage Charging and Discharging Strategy},
	author = {Jin, Ruiyang and Lu, Chao and Song, Jie},
	year = {2022},
	month = jun,
	journal = {IEEE Transactions on Engineering Management},
	volume = {69},
	number = {3},
	pages = {755--764},
	issn = {1558-0040},
	doi = {10.1109/TEM.2020.3003306}
}

@article{hossainlipuReviewControllersOptimizations2022,
	title = {A Review of Controllers and Optimizations Based Scheduling Operation for Battery Energy Storage System towards Decarbonization in Microgrid: Challenges and Future Directions},
	shorttitle = {A Review of Controllers and Optimizations Based Scheduling Operation for Battery Energy Storage System towards Decarbonization in Microgrid},
	author = {Hossain Lipu, M. S. and Ansari, Shaheer and Miah, Md. Sazal and Hasan, Kamrul and Meraj, Sheikh T. and Faisal, M. and Jamal, Taskin and Ali, Sawal H. M. and Hussain, Aini and Muttaqi, Kashem M. and Hannan, M. A.},
	year = {2022},
	month = aug,
	journal = {Journal of Cleaner Production},
	volume = {360},
	pages = {132188},
	issn = {0959-6526},
	doi = {10.1016/j.jclepro.2022.132188},
	urldate = {2022-10-31},
	langid = {english}
}

@book{banosResetControlSystems2011,
	title = {Reset Control Systems},
	author = {Banos, Alfonso and Barreiro, Antonio},
	year = {2011},
	month = oct,
	publisher = {Springer London Ltd},
	urldate = {2022-12-01},
	isbn = {978-1-4471-2216-6}
}

@article{huResetControlConsensus2022,
	title = {Reset Control for Consensus of Double-Integrator Multi-Agent Systems},
	author = {Hu, Wenfeng and Cheng, Yi and Chen, Zhiyong},
	year = {2022},
	month = feb,
	journal = {Automatica},
	volume = {136},
	pages = {110057},
	issn = {0005-1098},
	doi = {10.1016/j.automatica.2021.110057},
	urldate = {2022-11-06},
	langid = {english}
}

@article{mengResetControlSynchronization2019,
	title = {Reset Control for Synchronization of Multi-Agent Systems},
	author = {Meng, Xiangyu and Xie, Lihua and Soh, Yeng Chai},
	year = {2019},
	month = jun,
	journal = {Automatica},
	volume = {104},
	pages = {189--195},
	issn = {0005-1098},
	doi = {10.1016/j.automatica.2019.02.042},
	urldate = {2022-11-06},
	langid = {english}
}

@ARTICLE{10302354,
	author={Zhang, Yalin and Liu, Zhongxin and Chen, Zengqiang},
	journal={IEEE Transactions on Smart Grid}, 
	title={A Marginal Cost Consensus Scheme With Reset Mechanism for Distributed Economic Dispatch in BESSs}, 
	year={2023},
	volume={},
	number={},
	pages={1-1},
	doi={10.1109/TSG.2023.3328937}}

@article{linNecessarySufficientGraphical2005,
	title = {Necessary and Sufficient Graphical Conditions for Formation Control of Unicycles},
	author = {Lin, Zhiyun and Francis, B. and Maggiore, M.},
	year = {2005},
	month = jan,
	journal = {IEEE Transactions on Automatic Control},
	volume = {50},
	number = {1},
	pages = {121--127},
	issn = {1558-2523},
	doi = {10.1109/TAC.2004.841121}
}

@article{cleggNonlinearIntegratorServomechanisms1958,
	title = {A Nonlinear Integrator for Servomechanisms},
	author = {Clegg, J. C.},
	year = {1958},
	month = mar,
	journal = {Transactions of the American Institute of Electrical Engineers, Part II: Applications and Industry},
	volume = {77},
	number = {1},
	pages = {41--42},
	issn = {2379-6774},
	doi = {10.1109/TAI.1958.6367399}
}

@article{zhangDistributedFiniteTimeMultiagent2019,
	title = {Distributed Finite-Time Multiagent Control for DC Microgrids With Time Delays},
	author = {Zhang, Runfan and Hredzak, Branislav},
	year = {2019},
	month=may,
	journal = {IEEE Transactions on Smart Grid},
	volume = {10},
	number = {3},
	pages = {2692--2701},
	issn = {1949-3061},
	doi = {10.1109/TSG.2018.2808467},
	eventtitle = {IEEE Transactions on Smart Grid}
}

@article{artsteinLinearSystemsDelayed1982,
	title = {Linear Systems with Delayed Controls: A Reduction},
	shorttitle = {Linear Systems with Delayed Controls},
	author = {Artstein, Z.},
	year = {1982},
	month=aug,
	journal = {IEEE Transactions on Automatic Control},
	volume = {27},
	number = {4},
	pages = {869--879},
	issn = {1558-2523},
	doi = {10.1109/TAC.1982.1103023},
	eventtitle = {IEEE Transactions on Automatic Control}
}

@article{Chen2018,
	title = {Delay effects on consensus-based distributed economic dispatch algorithm in microgrid},
	author = {Chen, Gang and Zhao, Zhongyuan},
	year = {2018},
	month=jan,
	journal = {IEEE Transactions on Power Systems},
	volume = {33},
	number = {1},
	pages = {602--612},
	issn = {1558-0679},
	doi = {10.1109/TPWRS.2017.2702179},
}
\end{document}